\documentclass[11pt]{article}
\usepackage{tikz}
\usetikzlibrary{positioning} % 用于节点相对定位
\usepackage{graphicx}%插图
\usepackage{float}%浮动
\usepackage{subfigure}%插入多图
\usepackage{caption}
\usepackage{multirow}
\usepackage{makecell}
\usepackage{appendix}
\usepackage[figuresright]{rotating}
\usepackage{booktabs}
\usepackage[linesnumbered,ruled]{algorithm2e}
\usepackage[margin=1in]{geometry}
\usepackage{hyperref}
\usepackage{amsfonts}
\usepackage{mathrsfs}
\usepackage{comment}
\usepackage{amsmath}
\usepackage{amssymb}
\usepackage{amsthm}
\usepackage{amscd}
\usepackage{indentfirst}
\usepackage[all]{xy}
\usepackage{titlesec}
\usepackage{enumerate}
\usepackage{bm}
\usepackage{enumitem}
\usepackage{color}
\usepackage{threeparttable}
\usepackage{verbatim}
\usepackage{dsfont}
\usepackage{arydshln}
\usepackage{makecell}
\usepackage{fancyhdr}
\newtheorem{theorem}{Theorem}[section]
\newtheorem{lemma}[theorem]{Lemma}
\newtheorem{proposition}[theorem]{Proposition}

\newtheorem{conjecture}[theorem]{Conjecture}
\newtheorem{definition}[theorem]{Definition}

\newtheorem{remark}[theorem]{Remark}

\newtheorem{example}[theorem]{Example}

\DeclareMathOperator{\rank}{rank}

\begin{document}
\title{The dimensions of Schur squares of HRS codes\footnote{The research was supported by the National Natural Science Foundation of China under the Grants 12222113 and 12441105.}}
\author{Haojie Gu\footnote{Haojie Gu is with the School of Mathematical Sciences, Capital Normal University, Beijing, China, 100048. Email: 2200502051@cnu.edu.cn.},
\and Zhihao Zhu\footnote{Zhihao Zhu is with the School of Mathematical Sciences, Capital Normal University, Beijing, China, 100048. Email: 2240501027@cnu.edu.cn.},
	\and Jun Zhang\footnote{Jun Zhang is with the School of Mathematical Sciences, Capital Normal University, Beijing, China, 100048. Email: junz@cnu.edu.cn.}
}
\date{}
\maketitle

\begin{abstract}
The Schur square of linear codes over a finite field has emerged as a fundamental operation in both classical and quantum coding theory. In this paper, we investigate the Schur square problem of Hyperderivative Reed-Solomon (HRS) codes. By solving certain special determinants, we first give a lower bound and an upper bound for the dimensions of Schur squares of HRS codes, and then prove that when $p\geq t\geq 2s$ and $t\leq \frac{r+2s-1}{2}$, the dimension of the Schur square of the HRS code $HRS_{t}(\{\alpha_{1},\dots,\alpha_{r}\},s)$ (with length $rs$ and dimension $t$) reaches the upper bound $(2t-2s+1)s$. In particular, when $p \ge t=2s$ and $r\geq t+1$, the dimension of the Schur square equals $\frac{t(t+1)}{2}$ which is the dimension of the Schur squares of random codes with high probability. As an application in code-based cryptography, HRS codes with specific parameter settings might resist the attack of Schur square distinguisher. 

% thereby generalizing the results on Schur squares of Reed-Solomon (RS) codes.
	\begin{flushleft}
		\textbf{Keywords: HRS codes, Schur squares, dimensions.} 
	\end{flushleft}
\end{abstract}

\section{Introduction}
Throughout this paper, let $\mathbb{F}_{q}$ be a finite field with size $q$ and characteristic $p$. Let $\mathbb{F}_{q}^n$ be the $n$-dimensional vector space over the finite field $\mathbb{F}_{q}$ and $Mat_{s\times r}(\mathbb{F}_{q})$ be the $\mathbb{F}_{q}$-vector space of $s\times r$ matrices with entries from $\mathbb{F}_{q}$. Vectors and matrices are respectively denoted in bold letters and bold capital letters such as $\boldsymbol{a}$ and $\boldsymbol{A}$. We always denote the entires of $\boldsymbol{A}\in Mat_{s\times r}(\mathbb{F}_{q})$ by $a_{i,j}$ for $i=1,\dots,s$ and $j=1,\dots r.$ 
%The Schur square $\boldsymbol{u}\star\boldsymbol{v}$ of vectors $\boldsymbol{u},\boldsymbol{v}\in \mathbb{F}_{q}^{n}$ is defined as:
%$$\boldsymbol{u}\star\boldsymbol{v}:=(u_{1}v_{1},\dots,u_{n}v_{n}).$$
%The $i$-th power $\boldsymbol{u}\star\cdots\star\boldsymbol{u}$ is denoted by $\boldsymbol{u}^{i}$.

%\subsection{Linear codes in the Hamming Metric}
For any vector $ \boldsymbol{x}=(x_1,x_2,\cdots,x_n)\in \mathbb{F}_{q}^n$, the \emph{Hamming weight} $\omega_{H}( \boldsymbol{x})$ of $ \boldsymbol{x}$ is defined to be the number of non-zero coordinates, i.e.,
$\omega_{H}( \boldsymbol{x})=|\left\{1\leq i\leq n\,|\, x_i\neq 0\right\}|.$

An $[n,k,d]_{q}$-linear code $\mathcal{C}\subseteq \mathbb{F}_{q}^n$ is a $k$-dimensional linear subspace of $\mathbb{F}_{q}^n$ with minimal distance $d=d_{H}(\mathcal{C})$ defined as $$d_{H}(\mathcal{C})=\min\left\{\omega_{H}(\boldsymbol{c}): \boldsymbol{c}\in\mathcal{C}\backslash\{0\}\right\}.$$
For any vector $\boldsymbol{u}\in\mathbb{F}_{q}^n$, the error distance from $\boldsymbol{u}$ to $\mathcal{C}$ is defined as
$$d_{H}(\boldsymbol{u},\mathcal{C})=\min\{d_{H}(\boldsymbol{u},\boldsymbol{v})\,|\,\boldsymbol{v}\in C\},$$
where $d_{H}(\boldsymbol{u},\boldsymbol{v})=|\{1\leq i\leq n\,|\,u_{i}\neq v_{i}\}|$
is the Hamming distance between vectors $\boldsymbol{u}$ and $\boldsymbol{v}$. The well-known Singleton bound says that $d_{H}\leq n-k+1$ for any $[n,k,d_{H}]$ linear code $\mathcal{C}$. If $d_{H}=n-k+1$, then $\mathcal{C}$ is called a maximum distance separable (MDS) code. The main ingredient in the construction of MDS codes is the generalized Reed-Solomon code, which itself is also MDS.

\begin{definition}
	Let $\boldsymbol{\alpha}=\left(\alpha_{1},\cdots,\alpha_{n}\right)\in \mathbb{F}_{q}^n$ with pairwise distinct $\alpha_{i}$'s, and $\boldsymbol{v}=\left(v_{1},\cdots,v_{n}\right)\in (\mathbb{F}_{q}^*)^n$. For $k\leq n$, the generalized Reed-Solomon (GRS) code $GRS_{\boldsymbol{v},k}(\mathcal{\boldsymbol{\alpha}})$ of length $n$, dimension $k$ and scaling vector $\boldsymbol{v}$ is defined as 
    \begin{equation*}
    GRS_{\boldsymbol{v},k}(\mathcal{\alpha})=\left\{(v_1f(\alpha_{1}),\cdots,v_nf(\alpha_{n})):f(x)\in \mathbb{F}_{q}[x],\, \deg(f)\leq k-1\right\}.
	\end{equation*}
\end{definition}

The Schur square of linear codes has emerged as a fundamental operation in both classical and quantum coding theory~\cite{randriambololona2015products}. Schur squares of Reed-Solomon, cyclic, Reed–Muller, hyperbolic, and toric codes have been studied and applied in both coding theory and cryptography.

\begin{definition}[Schur square~\cite{mirandola2012schur}]
   Let $\mathcal{C}$ be an $[n,k,d]_{q}$-linear code. The Schur square  code $\widehat{\mathcal{C}}$ of $\mathcal{C}$ is defined as:   
   $$\widehat{\mathcal{C}}:=\mathrm{Span}_{\mathbb{F}_{q}}\left\{\boldsymbol{c}_{1}\star\boldsymbol{c}_{2}:\boldsymbol{c}_{1},\boldsymbol{c}_{2}\in\mathcal{C}\right\}\subseteq \mathbb{F}_{q}^n,$$ 
   where $\boldsymbol{x}\star \boldsymbol{y}:= (x_1y_1,\cdots,x_ny_n)\in\mathbb{F}_{q}^n$  for $\boldsymbol{x}$= $(x_1,\cdots,x_{n}),\boldsymbol{y} = (y_1,\cdots,y_n)\in\mathbb{F}_{q}^n$. 
\end{definition}

Clearly, the Schur square $\widehat{\mathcal{C}}$ has the same length as the original code $\mathcal{C}$. Moreover, $\widehat{\mathcal{C}}$ contains a ``copy" of $\mathcal{C}$ under the map $\boldsymbol{c}\mapsto \boldsymbol{c}\star \boldsymbol{c}$, so we have the square distance $d_H(\widehat{\mathcal{C}})\leq d_H(\mathcal{C})$ and the square dimension $\widehat{k}=\dim_{\mathbb{F}_{q}}(\widehat{\mathcal{C}})\geq k$. Any $\mathbb{F}_{q}$-basis $\left\{\boldsymbol{g}_1,\cdots,\boldsymbol{g}_k\right\}$ of $\mathcal{C}$ gives a generator system $\left\{\boldsymbol{g}_{i}\star\boldsymbol{g}_{j}:1\leq i\leq j\leq k\right\}$ of $\mathcal{C}$ over $\mathbb{F}_{q}$, hence $\widehat{k}\leq\frac{k(k+1)}{2}$. Moreover, for a random $[n,k]$ linear code $\mathscr{C}$ over $\mathbb{F}_{q}$, the Schur square has dimension $\dim_{\mathbb{F}_{q}}(\widehat{\mathscr{C}})=\min\left\{n,\binom{k+1}{2}\right\}$ with high probability~\cite{randriambololona2015linear}.

The square dimension $\widehat{k}$ of $\mathcal{C}$ is an important parameter for the security of the McEliece cryptosystem based on $\mathcal{C}$. In the McEliece cryptosystem, one performs some random operation on the secret key, a generator matrix $G$ of $\mathcal{C}$, to hide the algebraic structure of $G$. By considering the efficiency of code-based cryptosystems, linear codes with strong algebraic structure would be preferable to reduce the key size. However, the strong algebraic structure often makes the square dimension significantly smaller than the corresponding value of a random code, which leads to a  distinguisher of the public code from random codes.  %However, the public key (a randomized generator matrix $G'$ from $G$) has significantly small square dimension compared with the random codes, which leads to a Schur square attack on the cryptosystem. 
Notable examples include GRS codes~\cite{marquez2013non}, low co-dimensional subcodes of GRS codes~\cite{wieschebrink2010cryptanalysis}, Reed-Muller codes~\cite{borodin2014effective}, Polar codes~\cite{druagoi2018vulnerabilities}, some Goppa codes~\cite{couvreur2016polynomial}, high rate alternant codes~\cite{faugere2013distinguisher} and algebraic geometry codes~\cite{couvreur2014polynomial,couvreur2017cryptanalysis}, among others. 

In this paper, we will study the Schur square of a variant of GRS codes, called Hyperderivative Reed-Solomon (HRS for short) code, and prove that they have large square dimensions.

%Clearly $\widehat{\mathcal{C}}$ has same length as $\mathcal{C}$. Moreover, $\widehat{\mathcal{C}}$ contains a ``copy" of $\mathcal{C}$, given by the  map $\boldsymbol{A}\mapsto\boldsymbol{A}\star\boldsymbol{A}$, hence it has dimension $\widehat{t}\geq t$ and distance $\widehat{d_{N}(\mathcal{C})}\leq d_{N}(\mathcal{C})$. Often we will call $\widehat{t}$ and $\widehat{d_{N}(\mathcal{C})}$ the product dimension and the product distance of $\mathcal{C}$.{\color{red} We will be interested in more precise estimates of $\widehat{t}$.}

%\subsection{Organization}
The organization is as follows. In Section 2, we introduce basic objects, including Hyperderivatives, HRS code and some useful determinants. In Section 3, we present a matrix whose row vectors generate the code \(\widehat{HRS_{t}(\boldsymbol{\alpha},s)}\) but not necessarily linearly independent. As a consequence, we obtain a lower bound and an upper bound for the square dimensions of HRS codes. We also show that the upper bound is achievable under certain conditions. In Section 4, we give a conclusion of this paper.%and we prove that when $p\geq t\geq 2s$ and $t\leq \frac{r-s+1}{2}$, the dimension of  $\widehat{HRS_{t}(\boldsymbol{\alpha},s)}$ is $(2t-2s+1)s$.

\section{Preliminaries}
In this section we first fix the notation used throughout the paper and then establish several lemmas on determinants of matrices with special structure that will be used in the sequel.

%\hspace{\fill}

%We always denote the entries of a vector $\mathbf{a}\in \mathbb{F}_{q}^{n}$ by $a_{1},\dots,a_{n}$.
\begin{itemize}
    \item Let $\boldsymbol{\alpha} = (\alpha_{1},\dots,\alpha_{r}) \in \mathbb{F}_{q}^{r}$.
For each integer $i \ge 0$, define
\[
  M_{i}(\boldsymbol{\alpha}) :=
  \begin{pmatrix}
    1          & \cdots & 1 \\
    \alpha_{1} & \cdots & \alpha_{r} \\
    \vdots     &        & \vdots \\
    \alpha_{1}^{i} & \cdots & \alpha_{r}^{i}
  \end{pmatrix}
  \in \mathbb{F}_{q}^{(i+1)\times r},
\]
and set
\[
  V(\boldsymbol{\alpha}):=\det(M_{r-1}(\boldsymbol{\alpha}) )
  = \prod_{1 \le s < t \le r} (\alpha_{t} - \alpha_{s}),
\]
the Vandermonde determinant associated with $\alpha_{1},\dots,\alpha_{r}$.
   \item For integers $a,b,i,j,m$ with $0 \le a \le b \le p-1$, $i \ge 0$ and $j \ge m-1$, define
    \[
      M(a,b,i,j,m)
      :=
      \begin{pmatrix}
        \binom{i}{a}\binom{j}{a}      & \cdots & \binom{i}{b}\binom{j}{b} \\
        \binom{i+1}{a}\binom{j-1}{a}  & \cdots & \binom{i+1}{b}\binom{j-1}{b} \\
        \vdots                        & \ddots & \vdots                     \\
        \binom{i+m-1}{a}\binom{j+1-m}{a} & \cdots & \binom{i+m-1}{b}\binom{j+1-m}{b}
      \end{pmatrix}
      \in \mathbb{F}_{q}^{m \times (b-a+1)}.
    \]
    When $m = b-a+1$, we abbreviate the notation to $M(a,b,i,j)$.

\item Under the same assumptions on $a,b,i,j,m$, and for integers $k$ and $\ell$ with
    $1 \le \ell \le b-a+1$, let $M_{\ell,k}(a,b,i,j,m)$ denote the matrix obtained from
    $M(a,b,i,j,m)$ by replacing its $\ell$-th column with the column vector
    \[
      \begin{pmatrix}
        \binom{i}{k}\binom{j}{k} \\
        \vdots \\
        \binom{i+m-1}{k}\binom{j+1-m}{k}
      \end{pmatrix}.
    \]
    In other words, $M_{\ell,k}(a,b,i,j,m)$ has the form
    \[
      \small{\begin{pmatrix}
        \binom{i}{a}\binom{j}{a}
        & \cdots
        & \binom{i}{a+\ell-2}\binom{j}{a+\ell-2}
        & \binom{i}{k}\binom{j}{k}
        & \binom{i}{a+\ell}\binom{j}{a+\ell}
        & \cdots
        & \binom{i}{b}\binom{j}{b}
        \\
        \vdots & \ddots & \vdots & \vdots & \vdots & \ddots & \vdots \\
        \binom{i+m-1}{a}\binom{j+1-m}{a}
        & \cdots
        & \binom{i+m-1}{a+\ell-2}\binom{j+1-m}{a+\ell-2}
        & \binom{i+m-1}{k}\binom{j+1-m}{k}
        & \binom{i+m-1}{a+\ell}\binom{j+1-m}{a+\ell}
        & \cdots
        & \binom{i+m-1}{b}\binom{j+1-m}{b}
      \end{pmatrix}.}
    \]
    When $m = b-a+1$, we abbreviate the notation to $M_{\ell,k}(a,b,i,j)$.

\end{itemize}
\subsection{ HRS codes}
The following notion is often referred to as the Hasse derivative. 
In this paper, however, we follow Skriganov \cite{skriganov2001coding} 
and use the term \emph{hyperderivative}, see also Lidl and Niederreiter \cite{lidl1997finite}.

\begin{definition}
Let $t$ be a positive integer, and let $f(x)=f_{0} + f_{1} x + \cdots + f_{t-1} x^{t-1} \in \mathbb{F}_{q}[x]_{\leq t-1}$.
For an integer $j$ with $0 \le j < p$, the $j$-th hyperderivative of $f(x)$ is the polynomial
\begin{equation*}
	\partial^{(j)} f(X)=\sum\limits_{i=0}^{t-1}\binom{i}{j}f_{i}X^{i-j},
	\end{equation*}
where we use the convention
$
  \binom{i}{j} =
  \begin{cases}
    \displaystyle \frac{i!}{(i-j)!\, j!}, & 0 \le j \le i,\\[0.8ex]
    0, & \text{otherwise}
  \end{cases}.
$
\end{definition}

\begin{remark}
    In fact, we can ignore the condition $j<p$: first compute the binomial coefficients over the integers $\mathbb{Z}$ and then consider the results by modulo $p$.
\end{remark}

Next, we give the definition of HRS codes.
\begin{definition}[\cite{can2025generalized}]
    Let $r,s,t$ be positive integers such that $1 \le s \le p$, $1 \le r \le q$ and 
$1 \le t \le rs$. Our construction is based on an $r$-tuple $\boldsymbol{\alpha} = (\alpha_{1},\dots,\alpha_{r}) \in \mathbb{F}_{q}^{r},
$
referred to as the list of evaluation points, and on an $s \times r$ matrix
\[
  \boldsymbol{V} = (v_{i,j})_{1 \le i \le s,\; 1 \le j \le r} \in \operatorname{Mat}_{s \times r}(\mathbb{F}_{q}^{*}),
\]
called the multiplier matrix. The HRS code $HRS_{t}(\boldsymbol{\alpha}, \boldsymbol{V}, s)$ is defined to be the image of the following evaluation map
\[
\begin{aligned}
  \operatorname{EV}_{\boldsymbol{\alpha},\boldsymbol{V}} :
    \mathbb{F}_{q}[x]_{\le t-1}
    &\longrightarrow \operatorname{Mat}_{s \times r}(\mathbb{F}_{q}),\\
  f(x) &\longmapsto
  \begin{pmatrix}
    v_{1,1}\,\partial^{(0)} f(\alpha_{1}) & \cdots & v_{1,r}\,\partial^{(0)} f(\alpha_{r}) \\
    v_{2,1}\,\partial^{(1)} f(\alpha_{1}) & \cdots & v_{2,r}\,\partial^{(1)} f(\alpha_{r}) \\
    \vdots                                & \ddots & \vdots                                \\
    v_{s,1}\,\partial^{(s-1)} f(\alpha_{1}) 
      & \cdots 
      & v_{s,r}\,\partial^{(s-1)} f(\alpha_{r})
  \end{pmatrix},
\end{aligned}
\]
where $\partial^{(i)} f$ denotes the $i$-th hyperderivative of $f$. When the multiplier matrix $\boldsymbol{V}$ is the all-ones matrix, we simply write $HRS_{t}(\boldsymbol{\alpha}, s)$ and $EV_{\boldsymbol{\alpha}}(f)$, where $f\in\mathbb{F}_{q}[x]_{\leq t-1}$.
\end{definition}

We fix a list of evaluation points $\boldsymbol{\alpha}=(\alpha_{1},\cdots,\alpha_{r})\in\mathbb{F}_{q}^r$. If the codewords $EV_{\boldsymbol{\alpha}}(f)$ of code $HRS_{t}(\boldsymbol{\alpha},s)$ are arranged from left to right and from top to bottom into an $sr$-dimensional vector 
\[\left(\partial^{(0)}(f(\boldsymbol{\alpha})),\partial^{(1)}(f(\boldsymbol{\alpha})),\cdots,\partial^{(s-1)}(f(\boldsymbol{\alpha}))\right),\]
then the generator matrix of $HRS_{t}(\boldsymbol{\alpha},s)$ becomes 
\begin{equation*}
   \begin{pmatrix}
    \boldsymbol{1}&\boldsymbol{0}&\cdots&\boldsymbol{0}\\
    \boldsymbol{\alpha}&\boldsymbol{1}&\cdots&\boldsymbol{0}\\
    \boldsymbol{\alpha}^2&\binom{2}{1}\boldsymbol{\alpha}&\cdots&\boldsymbol{0}\\
    \vdots&\vdots&\ddots&\vdots\\
    \boldsymbol{\alpha}^{t-1}&\binom{t-1}{1}\boldsymbol{\alpha}^{t-2}&\cdots&\binom{t-1}{s-1}\boldsymbol{\alpha}^{(t-1)-(s-1)}
  \end{pmatrix},
\end{equation*}
where $f\in\mathbb{F}_{q}[x]_{\leq t-1}$, $\partial^{(i)}(f(\boldsymbol{\alpha}))=(\partial^{(i)}f(\alpha_{1}),\cdots,\partial^{(i)}f(\alpha_{r}))$ for all $0\leq i\leq s-1$ and 
$$\binom{j}{i}\boldsymbol{\alpha}^{j-i}=\left(\binom{j}{i}{\alpha_{1}}^{j-i},\binom{j}{i}{\alpha_{2}}^{j-i},\cdots,\binom{j}{i}{\alpha_{r}}^{j-i}\right)\ \mbox{for all}\ 0\leq j\leq t-1,0\leq i\leq s-1.$$

The HRS codes can be viewed as a variant of the classic generalized Reed-Solomon codes. The concept was proposed by Rosenbloom and
Tsfasman in~\cite{ozen2006linear}.
The HRS codes achieve the Singleton-like bound under the Niederreiter-Rosenbloom-Tsfasman (NRT) metric which was first introduced by Niederreiter in~\cite{niederreiter1987point} and then by Rosenbloom and Tsfasman in~\cite{rosenbloom1997codes}. Gu and Zhang considered the decoding problem of HRS codes under NRT-metric and proposed a Welch-Berlekamp algorithm for the unique decoding in~\cite{gu2026unique} and a list decoding algorithm in~\cite{gu2026list}. The dimension of the Schur square of a linear code is a key indicator of Schur square distinguisher. We further observe an interesting phenomenon
 that the dimension of the Schur square of an HRS code could be quite large compared with that of a random code. Therefore, in this paper, we investigate the dimensions of Schur squares of HRS codes.

As we work on the matrices instead of vectors, we recall the definition of Schur squares on the space of matrices.
%Next, we extend codewords from vectors to matrices.
Denote by $\star$ the  Schur product on $\operatorname{Mat}_{s\times}(\mathbb{F}_{q})$, i.e., $\boldsymbol{A}\star\boldsymbol{B}:=(a_{i,j}b_{i,j})_{1\leq i\leq s\atop 1\leq j\leq r}\in\operatorname{Mat}_{s\times r}(\mathbb{F}_{q})$
where $\boldsymbol{A}=(a_{i,j})_{1\leq i\leq  s\atop 1\leq j\leq r},\boldsymbol{B}=(b_{i,j})_{1\leq i\leq s\atop 1\leq j\leq r}\in\operatorname{Mat}_{s\times r}(\mathbb{F}_{q})$.
\begin{definition}[Schur square code on $\operatorname{Mat}_{s\times r}(\mathbb{F}_{q})$]
   Let $\mathcal{C}\subseteq\operatorname{Mat}_{s\times r}(\mathbb{F}_{q})$ be an linear $[rs,t]$ code. We call $\widehat{\mathcal{C}}:=\operatorname{Span}_{\mathbb{F}_{q}}\left\{\boldsymbol{A}\star\boldsymbol{B}:\boldsymbol{A},\boldsymbol{B}\in\mathcal{C}\right\}\subseteq \operatorname{Mat}_{s\times r}(\mathbb{F}_{q})$ the Schur square code of $\mathcal{C}$. 
\end{definition}

\subsection{Some Determinants}
In this subsection, we compute several determinants that will be used in the proofs of our main theorems.

\begin{lemma}\label{Lem:2.1}
Let $a,b,i,j$ be integers with $0 \le a \le b \le p-1$, $i \ge 0$ and $j \ge b-a+1$. Then
\[
  \det(M(a,b,i,j))
  = \left(\prod_{\ell=0}^{b-a}
      \frac{\binom{i+\ell}{a}\binom{j-\ell}{a}}{\binom{a+\ell}{\ell}^{2}}\right)
    \cdot
    \left(\prod_{w=0}^{b-a-1}
      \frac{\displaystyle\prod_{u=1}^{b-a-w}\bigl(j - i + 1 - w - 2u\bigr)}
           {(b-a-w)!}\right).
\]
In particular, we have
\[
  \det(M(0,b,i,j))
  = \prod_{w=0}^{b-1}
      \frac{\displaystyle\prod_{u=1}^{b-w}\bigl(j - i + 1 - w - 2u\bigr)}
           {(b-w)!}.
\]
\end{lemma}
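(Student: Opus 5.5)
We will prove the identity over $\mathbb{Q}$ (as an identity of rational numbers, indeed of polynomials in $i,j$) and then reduce modulo $p$. This is legitimate because every denominator on the right-hand side, namely $\binom{a+\ell}{\ell}$ and $(b-a-w)!$, involves only integers at most $b\le p-1$ in its prime factorization, so it is a $p$-adic unit. Write $m=b-a+1$, so the rows are indexed by $\ell=0,\dots,m-1$ and the columns by $c=a,\dots,b$. The entry in row $\ell$, column $c$ is $\binom{i+\ell}{c}\binom{j-\ell}{c}$.

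\textbf{Step 1: factor out the row factor $\binom{i+\ell}{a}\binom{j-\ell}{a}$.} Use
\[
\binom{x}{a+k}=\binom{x}{a}\binom{x-a}{k}\Big/\binom{a+k}{k}.
\]
Then the $(\ell,k)$ entry, with $k=c-a$, becomes
\[
\binom{i+\ell}{a}\binom{j-\ell}{a}\cdot\frac{\binom{i+\ell-a}{k}\binom{j-\ell-a}{k}}{\binom{a+k}{k}^2}.
\]
Pulling $\binom{i+\ell}{a}\binom{j-\ell}{a}$ out of each row and $\binom{a+k}{k}^{-2}$ out of each column (with $k$ ranging over $0,\dots,b-a$, renamed $\ell$ in the statement) produces exactly the first product. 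It reduces the problem to the case $a=0$ with parameters $i'=i-a$, $j'=j-a$. Note $j'-i'=j-i$, which explains why the second product depends only on $j-i$. So it suffices to prove the ``in particular'' formula. Since that is a polynomial identity in $i,j$, shifting $i,j$ to $i-a,j-a$ (possibly negative) is harmless once we regard $\binom{x}{k}$ as the polynomial $x(x-1)\cdots(x-k+1)/k!$.

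\textbf{Step 2: the case $a=0$.} Set $D_b(i,j)=\det\bigl(\binom{i+\ell}{k}\binom{j-\ell}{k}\bigr)_{0\le \ell,k\le b}$. Each column $k$ is a polynomial in $\ell$ of degree at most $2k$, which is too high for a direct Vandermonde argument, so we will use row operations.

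Replace row $\ell$ by row $\ell+1$ minus row $\ell$, for $\ell=b-1,\dots,0$, working from the bottom. By Pascal's rule,
\[
\binom{i+\ell+1}{k}\binom{j-\ell-1}{k}-\binom{i+\ell}{k}\binom{j-\ell}{k}
=\binom{i+\ell}{k-1}\binom{j-\ell-1}{k}-\binom{i+\ell}{k}\binom{j-\ell-1}{k-1}.
\]
Using $\binom{x}{k}=\frac{x-k+1}{k}\binom{x}{k-1}$, this equals
\[
\binom{i+\ell}{k-1}\binom{j-\ell-1}{k-1}\cdot\frac{(j-\ell-k)-(i+\ell-k+1)}{k}
=\frac{j-i-2\ell-1}{k}\binom{i+\ell}{k-1}\binom{j-\ell-1}{k-1}.
\]
After this operation, the first column becomes $(1,0,\dots,0)^T$. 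Expanding along it leaves a $b\times b$ matrix. In it, row $\ell$ carries the factor $(j-i-2\ell-1)$, column $k$ carries $1/k$, and the remaining entries are $\binom{i+\ell}{k-1}\binom{(j-1)-\ell}{k-1}$, which is the matrix of $D_{b-1}(i,j-1)$. This yields the recursion
\[
D_b(i,j)=\frac{\prod_{\ell=0}^{b-1}(j-i-1-2\ell)}{b!}\,D_{b-1}(i,j-1).
\]
With $u=\ell+1$, the numerator is $\prod_{u=1}^{b}(j-i+1-2u)$, which matches the $w=0$ factor. Iterating, the step $w$ replaces $j$ by $j-w$, giving $\prod_{u=1}^{b-w}(j-i+1-w-2u)/(b-w)!$. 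Together with $D_0=1$, this yields the claimed product.

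\textbf{Main obstacle.} The main obstacle is Step 2: finding the row-difference operation and checking the Pascal-rule simplification. I expect the computation above to go through, but the index bookkeeping needs care. The differences must be taken in the order that does not reuse modified rows, and the shift $j\to j-1$ in the recursion must match the $-w$ in the formula. A further detail is where the hypothesis $j\ge b-a+1$ is used. It only ensures that the entries are the genuine binomial coefficients matching the paper's convention, since $j-\ell\ge0$. That should be noted when passing between the polynomial identity and the paper's convention $\binom{x}{k}=0$ for $k>x$. Both agree for nonnegative $x$, which holds here.
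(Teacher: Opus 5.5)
Your proof is correct and follows the paper's two steps: reduce to $a=0$ by pulling $\binom{i+\ell}{a}\binom{j-\ell}{a}$ out of the rows and $\binom{a+k}{k}^{-2}$ out of the columns, then apply the bottom-up differences $R_{\ell+1}\leftarrow R_{\ell+1}-R_{\ell}$ to get $D_b(i,j)=\frac{1}{b!}\prod_{u=1}^{b}(j-i+1-2u)\,D_{b-1}(i,j-1)$; your Pascal-rule computation is the paper's Lemma~\ref{Lem:A.1} with shifted indices. Your wording ``replace row $\ell$'' should read ``replace row $\ell+1$'', since that is what your stated outcome $(1,0,\dots,0)^{T}$ and the bottom-up order require; your polynomial-identity framing is also slightly cleaner than the paper's, because it covers cases such as $i<a$ or $j\le b$, where the reduced parameters $(i-a,j-a)$ fall outside the hypotheses of the paper's own $a=0$ step.
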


\begin{proof}
The proof is given in Appendix~A.
\end{proof}

\begin{lemma}\label{Lem:2.3}
Let $b,i,j,k,\ell$ be integers with $0 \le b \le p-1$, $i \ge 0$, $j \ge b+1$ and $ 1 \le \ell \le b+1 \le k$.
If $\ell = 1$, then
\small{\begin{equation*}
    \begin{aligned}
        \det(M_{1,k}(0,b,i,j)
  =& (-1)^{b}\,
    \frac{\displaystyle\prod_{h=0}^{b}(i+h)(j-h)}{(b!)^{2} \, k^{2}}
    \cdot
    \prod_{u=0}^{b-1}
      \frac{\displaystyle\prod_{h=0}^{b-1-u}\bigl(j-u-i-1-2h\bigr)}
           {(b-u-1)!\,(k-u-1)}
    \cdot
    \binom{i-1}{k-b-1}\binom{j-b-1}{k-b-1}.
    \end{aligned}
\end{equation*}}
If $\ell \ge 2$, then
\small{\begin{equation*}
\begin{aligned}
  \det\bigl(M_{\ell,k}(0,b,i,j)\bigr)
  &= (-1)^{\,b-\ell+1}\left(
     \prod_{m=0}^{\ell-2}
       \frac{
         \displaystyle\prod_{u=1}^{b-m}\bigl(j-m-i-2u+1\bigr)
       }{
         (b-m)! \,\dfrac{k-m}{\ell-m-1}
       }\right)
     \cdot\left(
     \frac{
       \displaystyle\prod_{h=0}^{b-\ell+1}(i+h)(j-\ell+1-h)
     }{
       \bigl((b-\ell+1)!\bigr)^{2}\,(k-\ell+1)^{2}
     } \right)\\
  &\quad\cdot\left(
     \prod_{u=0}^{b-\ell}
       \frac{
         \displaystyle\prod_{h=0}^{b-\ell-u}\bigl(j-u-i-\ell-2h\bigr)
       }{
         (b-\ell-u)!\,(k-\ell-u)
       }\right)
     \cdot
     \binom{i-1}{k-b-1}\binom{j-b-1}{k-b-1}.
\end{aligned}
\end{equation*}}
\end{lemma}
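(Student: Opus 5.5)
I will reduce the computation of $\det(M_{\ell,k}(0,b,i,j))$ to Lemma~\ref{Lem:2.1} by row and column operations that exploit the structure of the entries $\binom{i+h}{c}\binom{j-h}{c}$, working over $\mathbb{Q}$ and reducing modulo $p$ at the end (as in the Remark following the hyperderivative definition). The basic tool is the Pascal-type identity in the row index: replacing row $h+1$ by (row $h+1$) $-$ (row $h$) turns each entry $\binom{i+h+1}{c}\binom{j-h-1}{c}-\binom{i+h}{c}\binom{j-h}{c}$ into
\[
\binom{i+h}{c-1}\binom{j-h-1}{c}-\binom{i+h}{c}\binom{j-h-1}{c-1}
=\frac{(j-h-i-1)\,\binom{i+h}{c-1}\binom{j-h-1}{c-1}}{c}\cdot(\text{simple factor}),
\]
a lower-order binomial product times a linear factor in $j-i-h$. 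This is the mechanism that produces the factors $\prod_u (j-u-i-1-2h)$ in Lemma~\ref{Lem:2.1}, and I will reuse it here.

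\textbf{Case $\ell=1$.} The first column is now $\binom{i+h}{k}\binom{j-h}{k}$ and the remaining columns have indices $c=1,\dots,b$. I will first factor out of each row $h$ the quantity $(i+h)(j-h)$, using $\binom{i+h}{c}\binom{j-h}{c}=\frac{(i+h)(j-h)}{c^{2}}\binom{i+h-1}{c-1}\binom{j-h-1}{c-1}$. This accounts for $\prod_{h=0}^{b}(i+h)(j-h)$, the factor $1/k^{2}$ from the replaced column, and $(b!)^{-2}$ from the others. What remains is a matrix of the same shape with parameters $(i-1,j-1)$, whose columns have lower indices $0,\dots,b-1$ together with the index $k-1$. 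Moving the $(k-1)$-column to the last position gives the sign $(-1)^{b}$, and the result is $M_{b+1,k-1}(0,b-1,i-1,j-1)$ up to a rearrangement of columns. I then apply the row-difference elimination $b$ times. Each pass kills one column, contributes a factor $\prod_h(j-u-i-1-2h)/(b-u-1)!$, and divides the special column by $(k-u-1)$. After $b$ passes a single entry remains, which equals $\binom{i-1}{k-b-1}\binom{j-b-1}{k-b-1}$. The cleanest way to organise this is as an induction on $b$, with the induction hypothesis being the $\ell=1$ formula itself (suitably shifted).

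\textbf{Case $\ell\ge 2$.} Here the first column (index $0$) is all ones. I will subtract consecutive rows, or equivalently apply the elimination step directly, to clear the first column; Laplace expansion then reduces to a $b\times b$ determinant. That determinant has the same form but with $\ell$ decreased by one, with $k$ replaced accordingly, and with a factor $\prod_u (j-i-2u+1)/\bigl((b)!\,\tfrac{k}{\ell-1}\bigr)$ coming from the Pascal identity. Iterating $\ell-1$ times produces the product over $m=0,\dots,\ell-2$ and leaves an $\ell=1$ instance with parameters shifted by $\ell-1$. Applying the $\ell=1$ formula to that instance gives the remaining three factors and the sign $(-1)^{b-\ell+1}$.

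\textbf{Main obstacle.} The hard step is the bookkeeping in the elimination. I need the special column $k$ to transform in exactly the same way as the regular columns, namely by $\binom{i+h}{k}\binom{j-h}{k}\to$ a multiple of $\binom{i+h}{k-1}\binom{j-h-1}{k-1}$ with the scalar $1/k$ (resp.\ $\tfrac{\ell-m-1}{k-m}$) matching the stated constants. The regular columns must keep a Lemma~\ref{Lem:2.1}-type shape so that the induction closes. I will first check the small cases $b=0,1$ explicitly to pin down the signs and the normalising constants, and then set up the induction.
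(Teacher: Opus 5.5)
Your proposal is correct and follows essentially the same route as the paper. For $\ell\ge 2$, consecutive row differences give $\det M_{\ell,k}(0,b,i,j)=\frac{\prod_{u=1}^{b}(j-i-2u+1)}{b!\,k/(\ell-1)}\det M_{\ell-1,k-1}(0,b-1,i,j-1)$, which is iterated down to an $\ell=1$ instance; for $\ell=1$, you factor $(i+h)(j-h)$ out of the rows and $c^{2}$ out of the columns, move the special column to the end (sign $(-1)^{b}$), and run the same recursion $b$ times down to the $1\times 1$ determinant $\binom{i-1}{k-b-1}\binom{j-b-1}{k-b-1}$.

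Two slips should be fixed. First, the row difference equals exactly $\frac{j-i-1-2h}{c}\binom{i+h}{c-1}\binom{j-h-1}{c-1}$: there is no extra ``simple factor'', and the linear term is $j-i-1-2h$, not $j-h-i-1$. Second, the matrix after the $\ell=1$ factorisation is $M_{b+1,k-1}(0,b,i-1,j-1)$, not $M_{b+1,k-1}(0,b-1,i-1,j-1)$.
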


\begin{proof}
The proof is given in Appendix~A.
\end{proof}

To obtain an important result regarding the rank of a matrix, we first need to introduce a key determinant identity-Cauchy's determinant identity.
\begin{lemma}[Cauchy’s double alternant {\cite{cauchy1841memoire}}]
Let $r$ be a positive integer, and let $x_{1},\dots,x_{r}$ and
$y_{1},\dots,y_{r}$ be elements of a field such that
$x_{i} + y_{j} \neq 0$ for all $1 \le i,j \le r$. Then
\[
  \det\!\left(\frac{1}{x_{i}+y_{j}}\right)_{1 \le i,j \le r}
  =
  \frac{\displaystyle\prod_{1 \le i < j \le r}(x_{j}-x_{i})(y_{j}-y_{i})}%
       {\displaystyle\prod_{1 \le i,j \le r}(x_{i}+y_{j})}.
\]
\end{lemma}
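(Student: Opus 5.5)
The plan is to prove Cauchy's double alternant identity by viewing both sides as rational functions in the indeterminates $x_1,\dots,x_r,y_1,\dots,y_r$ and comparing their numerators. Once the identity holds in $\mathbb{Z}(x_1,\dots,x_r,y_1,\dots,y_r)$, we specialize to the given field elements. This is legitimate because the hypothesis $x_i+y_j\neq 0$ guarantees that the denominator $\prod_{i,j}(x_i+y_j)$ does not vanish under the specialization.

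\textbf{Step 1: clear denominators.} Set $P:=\prod_{1\le i,j\le r}(x_i+y_j)$ and
\[
D:=P\cdot\det\!\left(\frac{1}{x_i+y_j}\right)_{1\le i,j\le r}.
\]
Multiply row $i$ of the matrix by $\prod_{j}(x_i+y_j)$. Then $D$ is the determinant of the matrix whose $(i,j)$ entry is $\prod_{k\neq j}(x_i+y_k)$. This entry is a polynomial of degree $r-1$, so $D$ is a polynomial with integer coefficients, homogeneous of total degree $r(r-1)$.

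\textbf{Step 2: vanishing.} If $x_a=x_b$ for some $a\ne b$, two rows coincide and $D=0$. So $D$ is divisible by $x_b-x_a$ for every pair $a<b$. By the same argument applied to columns, $D$ is divisible by $y_b-y_a$ for every $a<b$. These $r(r-1)$ linear factors are pairwise non-associate irreducibles in the UFD $\mathbb{Z}[x,y]$, so their product divides $D$. That product also has degree $r(r-1)$. Hence
\[
D=c\prod_{i<j}(x_j-x_i)(y_j-y_i)
\]
for some constant $c\in\mathbb{Z}$.

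\textbf{Step 3: fix the constant.} I expect this to be the only real obstacle, and the cleanest route is induction on $r$.

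Multiply the original determinant by $x_r+y_r$ and set $x_r=-y_r$. In the last row, every entry tends to $0$ except the $(r,r)$ entry, which tends to $1$. So the limit of $(x_r+y_r)\det$ is the $(r-1)\times(r-1)$ Cauchy determinant. Now compare with the right-hand side under the same operation. Multiplying by $x_r+y_r$ and setting $x_r=-y_r$ turns it into the $(r-1)$ right-hand side times the factor
\[
\frac{\prod_{i<r}(x_r-x_i)(y_r-y_i)}{\prod_{i<r}(x_r+y_i)(x_i+y_r)}\Big|_{x_r=-y_r}.
\]
With $x_r=-y_r$, the numerator becomes $\prod_{i<r}(-y_r-x_i)(y_r-y_i)$ and the denominator becomes $\prod_{i<r}(y_i-y_r)(x_i+y_r)$. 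These agree, since
\[
(-y_r-x_i)(y_r-y_i)=(x_i+y_r)(y_i-y_r),
\]
so the factor equals $1$. Therefore $c_r=c_{r-1}$. The base case is $c_1=1$, because the $1\times 1$ case is $1/(x_1+y_1)$. This gives $c=1$ for all $r$.

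An alternative way to pin down $c$ is to check the leading coefficient under a suitable monomial order. The residue argument above is simpler, so I would use it.
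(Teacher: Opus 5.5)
The paper gives no proof of this lemma. It cites Cauchy and uses the formula as a black box, once, over $\mathbb{F}_p$ in the proof of Lemma~\ref{Lem:2.2}(i). Your proposal is therefore a correct, self-contained proof along the standard lines, and it checks out step by step. After clearing denominators, coincident rows (when $x_a=x_b$) force divisibility by each $x_b-x_a$. Coincident columns of the cleared matrix (when $y_a=y_b$, where both columns become $(x_i+y_a)\prod_{k\neq a,b}(x_i+y_k)$) force divisibility by each $y_b-y_a$. The degree count then leaves an integer constant. The residue at $x_r=-y_r$, using $(-y_r-x_i)(y_r-y_i)=(x_i+y_r)(y_i-y_r)$, gives $c_r=c_{r-1}=\dots=c_1=1$.

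Two points should be stated carefully when you write it up. First, the ``limit'' in Step~3 is really a substitution in $\mathbb{Z}[x,y]$ localized at the $x_i+y_j$ with $(i,j)\neq(r,r)$. This is legitimate because the images of these elements ($x_i+y_j$ for $i<r$, and $y_j-y_r$ for $j<r$) are nonzero. Second, the final specialization should be applied to the polynomial identity $D=\prod_{i<j}(x_j-x_i)(y_j-y_i)$ in $\mathbb{Z}[x,y]$, and you then divide by $P\neq 0$ in the field. Do not specialize the rational identity directly.

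Working over $\mathbb{Z}$ is exactly what the paper needs, since it makes the identity valid in characteristic $p$. Your formula also makes explicit that nonvanishing of the determinant requires the $x_i$ to be pairwise distinct and the $y_j$ to be pairwise distinct, not merely $x_i+y_j\neq 0$. In the paper's application ($x_\ell=A-\ell+1$, $y_j=j-1$, $1\le \ell,j\le B$) those differences are at most $B-1<p$. The paper leaves this check implicit, verifying only that the denominators are nonzero.
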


\begin{lemma}\label{Lem:2.2}
Let $p,t,s,i$ be positive integers such that $2s \le t\leq q$ and
$
  2t - 2s + 1 \;\le\; i \;\le\; 2t - 2.
$
Define $A= t - \Bigl\lceil \frac{i}{2} \Bigr\rceil$ and 
  $ B= s - t + \Bigl\lceil \frac{i}{2} \Bigr\rceil$,
so that $1 \le A,B \le s-1$ and $A + B = s$.
Consider the matrices
\[
  M\bigl(0,A-1,i-t+1,t-1\bigr) \in \mathbb{F}_q^{A\times A}
  \quad\text{and}\quad
  M\bigl(A,s-1,i-t+1,t-1,A\bigr) \in \mathbb{F}_q^{A\times B},
\]
and let 
\[
  M\bigl(0,A-1,i-t+1,t-1\bigr)\begin{pmatrix}
    c_{1,1} & \cdots & c_{1,B} \\
    \vdots  &        & \vdots  \\
    c_{A,1} & \cdots & c_{A,B}
  \end{pmatrix}
  = 
  M\bigl(A,s-1,i-t+1,t-1,A\bigr).
\]
Then the following statements hold:
\begin{enumerate}
  \item[(i)] If $2t - 2s + 1 \le i \le 2t - s - 1$ (in which case one has $A \ge B$),
  then the $B\times B$ leading principal submatrix
  \[
    \bigl(c_{\ell,j}\bigr)_{1 \le \ell \le B,\; 1 \le j \le B}
  \]
  is invertible.

  \item[(ii)] If $2t - s \le i \le 2t - 2$ (in which case one has $A \le B$),
  then the $A\times A$ submatrix
  \[
    \bigl(c_{\ell,j}\bigr)_{1 \le \ell \le A,\; B - A + 1 \le j \le B}
  \]
  is invertible.
\end{enumerate}
\end{lemma}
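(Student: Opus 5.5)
The plan is to use Cramer's rule on the defining relation. Put $i_0 := i-t+1$ and $j_0 := t-1$. Then $M(0,A-1,i_0,j_0)$ has rows indexed by $m=0,\dots,A-1$, with entries $\binom{i_0+m}{k}\binom{j_0-m}{k}$ for $k=0,\dots,A-1$. The right-hand side $M(A,s-1,i_0,j_0,A)$ has the same rows, but its columns are indexed by $k=A,\dots,s-1$.

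\textbf{Step 1 (invertibility of the coefficient matrix).} By Lemma~\ref{Lem:2.1}, $\det M(0,A-1,i_0,j_0)$ is a product of the factors $(j_0-i_0+1-w-2u)/(A-1-w)!$. Each such factor is an integer of absolute value less than $t\le p$ (after reducing to the relevant range; here $j_0-i_0 = 2t-2-i$ is small), and it is nonzero because of parity and range considerations. Hence the matrix is invertible, and $C=(c_{\ell,j})$ is uniquely determined.

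\textbf{Step 2 (closed form for each entry).} By Cramer's rule,
\[
c_{\ell,j}=\frac{\det M_{\ell,A-1+j}(0,A-1,i_0,j_0)}{\det M(0,A-1,i_0,j_0)}.
\]
Lemma~\ref{Lem:2.3} applies with $b=A-1$ and $k=A-1+j\ge A=b+1$, and it gives an explicit product for the numerator. After dividing, most factors cancel. The expected form is
\[
c_{\ell,j}=\pm\,\binom{i_0-1}{k-A}\binom{j_0-A}{k-A}\cdot \frac{R_\ell}{k-\ell+1}\prod_{u}\frac{1}{(k-\ell-u)}\cdots .
\]
What matters is that the $j$-dependence should collapse to a factor $\beta_j$ (depending only on $k$) times something like $\gamma_\ell/(k-\ell+1)$, possibly multiplied by a polynomial in $k$ that is a product of linear terms. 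Equivalently, one can hope to simplify to
\[
c_{\ell,j}=\beta_j\gamma_\ell\cdot\frac{\prod_{m=0}^{A-1}(k-m)}{(k-\ell+1)}\cdot(\text{const}).
\]
This is a Cauchy-type structure $1/(x_\ell+y_j)$ with $x_\ell=-(\ell-1)$ and $y_j=k=A-1+j$, up to row and column scalings.

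\textbf{Step 3 (the relevant square submatrix).} Restrict to the square submatrix specified in the statement. In case (i) this means rows $1,\dots,B$ and columns $1,\dots,B$; in case (ii) it means rows $1,\dots,A$ and columns $B-A+1,\dots,B$. After factoring out the row scalars $\gamma_\ell$ and the column scalars $\beta_j\prod(k-m)$, what remains is the Cauchy matrix $\bigl(1/(x_\ell+y_j)\bigr)$. Its determinant, by Cauchy's double alternant, is a ratio of products of differences $\ell'-\ell$ and $j'-j$ over products of $x_\ell+y_j=A+j-\ell$. All of these differences are integers of absolute value less than $s\le p$, and the denominators lie in $[1, s+A]$, which is below $t\le p$, so they are nonzero mod $p$. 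The remaining task is to check that the scalars are nonzero mod $p$. These scalars are $\binom{i_0-1}{k-A}$, $\binom{j_0-A}{k-A}$, $\gamma_\ell$ and $\prod(k-m)$. Their arguments are below $p$ because $i_0-1=i-t<t\le p$, $j_0-A\ge0$, and $k\le s-1<p$. One must still confirm that the binomials are nonzero, i.e.\ that the upper index is at least the lower index: for $k-A\le B-1$ we need $i-t\ge B-1$ and $t-1-A\ge B-1$, and both follow from $i\ge 2t-2s+1$ and $A+B=s\le t-s$.

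\textbf{Main obstacle.} The main difficulty is Step 2: carrying out the cancellation between the formulas in Lemma~\ref{Lem:2.3} and Lemma~\ref{Lem:2.1}, and recognising the Cauchy form, including the separate cases $\ell=1$ and $\ell\ge2$. If the cancellation does not give a pure Cauchy kernel, the fallback is to compute the determinant of the submatrix directly. To do so, view the augmented $A\times(A+B)$ system: the chosen minor of $C$ equals, up to sign, a ratio of maximal minors of $[M(0,A-1)\,|\,M(A,s-1)]$. Concretely, the minor equals the $A\times A$ (or $B\times B$) determinant of $M$ built from the column set obtained by replacing the selected rows' columns with the selected $k$'s. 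Such column sets are contiguous blocks $\{a,\dots,b\}$, so Lemma~\ref{Lem:2.1} applies directly, giving nonvanishing from the same $p$-range argument.
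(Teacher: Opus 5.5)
Your main route (Cramer's rule, Lemma~\ref{Lem:2.3} for the numerators, then Cauchy's double alternant with kernel $1/(A+j-\ell)$) is exactly the paper's argument for case (i). Two small corrections there. First, the column factor $\prod_{m=0}^{A-1}(k-m)$ sits in the denominator, not the numerator; this is harmless, since it is a nonzero column scaling. Second, the row factors $\gamma_\ell$ still have to be checked to be products of integers in $[1,t-1]$; the paper leaves this implicit too. For case (ii) the paper does not use Cauchy at all. It observes that the selected $A\times A$ block of $C$ equals $D^{-1}M(B,s-1,i-t+1,t-1)$ and applies Lemma~\ref{Lem:2.1} directly. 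Your Cauchy route would also work there, since the denominators $A+j-\ell$ lie in $[B-A+1,s-1]$, but it is heavier.

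Your ``fallback'' departs from the paper in case (i), and it is the better proof, so make it the main one. By the generalized Cramer rule, the minor of $C=D^{-1}N$ on rows $R$ and columns $J$ equals $\pm\det D'/\det D$. Here $D'$ is obtained from $D$ by replacing its columns indexed by $R$ with the columns of $N$ indexed by $J$, so it is always $A\times A$, never $B\times B$. In case (i) the resulting binomial indices are $\{B,\dots,A-1\}\cup\{A,\dots,s-1\}$, and in case (ii) they are $\{B,\dots,s-1\}$. Hence in \emph{both} cases the minor is $\pm\det M(B,s-1,i-t+1,t-1)/\det M(0,A-1,i-t+1,t-1)$. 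Both matrices have $b-a=A-1$ and the same $(i,j)$, so the $\prod_w\prod_u$ parts in Lemma~\ref{Lem:2.1} cancel, leaving
\[
\pm\prod_{\ell=0}^{A-1}\frac{\binom{i-t+1+\ell}{B}\binom{t-1-\ell}{B}}{\binom{B+\ell}{\ell}^{2}}.
\]
This is nonzero mod $p$ for three reasons:
\begin{itemize}
\item $i-t+1\ge B$, which is equivalent to $\lfloor i/2\rfloor\ge s-1$ and holds because $i\ge 2t-2s+1\ge 2s+1$;
\item $t-A\ge B$;
\item every upper index is at most $t-1<p$.
\end{itemize}
This treats (i) and (ii) uniformly and shows that neither Lemma~\ref{Lem:2.3} nor Cauchy's identity is needed for Lemma~\ref{Lem:2.2}. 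What the paper's route buys in exchange is an explicit formula for every entry $c_{\ell,j}$, not just the relevant minors.

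Finally, like the paper's proof, yours genuinely needs $t\le p$ rather than the $t\le q$ written in the statement. For $s=2$, $t=4$, $i=5$, $p=3$ one has $A=B=1$ and $c_{1,1}=\binom{2}{1}\binom{3}{1}=6\equiv 0$.
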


\begin{proof}
Please refer to Appendix~A for the proof.
\end{proof}

\section{The dimensions of Schur squares of HRS codes}
In this section, we investigate the dimensions of Schur squares of HRS codes. First, we give a generating system for the Schur square codes. Then we present our main result on the square dimensions of HRS codes. Finally, we list some numerical results to support our main result.

Thereinafter, we assume that $s\geq 2.$ For $s=1$, the HRS code is reduced to the classic GRS code $GRS_{\boldsymbol{v},t}(\boldsymbol{\alpha})$ with lenght $r$ and dimension $t$, whose square dimension equals $\min\{2t-1, r\}$~\cite{mirandola2012schur}.
%\begin{proposition}[{\cite{mirandola2012schur}}]
 %   Let $\mathcal{C}$ be an $[n,k]$ MDS code. If $2k-1\leq n$, then $\widehat{k}\geq 2k-1$. In particular, if $2k-1\leq n$, then the dimension of Schur square code of Reed-Solomon code $RS_{k}(\boldsymbol{\alpha})$ is $2k-1$.
%\end{proposition}
We first give a generating system for the Schur square code.
\begin{theorem}\label{thm:Schur-square-structure}
Let $q = p^{m}$ be a prime power with characteristic $p\geq 2s$.
Let $r,s,t$ be integers satisfying $s\geq 2,t\geq 2s, 1 \le t \le rs-1$ and $r\ge 2t-2s+1$. For $0\leq i\leq 2t-2$, let $A=t - \Bigl\lceil \frac{i}{2} \Bigr\rceil,B= s - t + \Bigl\lceil \frac{i}{2} \Bigr\rceil$.
%\[
%  t \ge 2s,\qquad 1 \le t \le rs-1,\qquad 
%\]
Let $\boldsymbol{\alpha} = (\alpha_{1},\dots,\alpha_{r}) \in \mathbb{F}_{q}^{r}$ 
with pairwise distinct entries, and let
  $\mathcal{C} = HRS_{t}(\boldsymbol{\alpha},s)$
be the hyperderivative Reed–Solomon code of length $rs$ and dimension $t$.   
 Then the Schur square code $\widehat{\mathcal{C}}$ of code $\mathcal{C}$ is generated by the rows of the following matrix:
\begin{equation}\label{matrix:gen}
  \begin{pmatrix}
    M_{2t-2s}(\boldsymbol{\alpha}) &&&&& \\[2pt]
    & M_{2t-2s-2}(\boldsymbol{\alpha}) &&&& \\
    && \ddots &&& \\
    &&& M_{2t-4s+6}(\boldsymbol{\alpha}) && \\
    &&&& M_{2t-4s+4}(\boldsymbol{\alpha}) & \\
    &&&&& M_{2t-4s+2}(\boldsymbol{\alpha}) \\[4pt]
    \boldsymbol{\alpha}^{2t-2s+1} & \boldsymbol{0} & \cdots & \boldsymbol{0} &
    \boldsymbol{0} & c_{1,1}^{(1)} \boldsymbol{\alpha}^{2t-4s+3} \\
    \vdots & \vdots & \vdots & \vdots & \vdots & \vdots \\
    \boldsymbol{0} & \boldsymbol{0} & \cdots & \boldsymbol{0} &
    \boldsymbol{\alpha}^{2t-4s+5} & c_{s-1,1}^{(1)} \boldsymbol{\alpha}^{2t-4s+3} \\[4pt]
    \boldsymbol{\alpha}^{2t-2s+2} & \boldsymbol{0} & \cdots & \boldsymbol{0} &
    \boldsymbol{0} & c_{1,1}^{(2)} \boldsymbol{\alpha}^{2t-4s+4} \\
    \vdots & \vdots & \vdots & \vdots & \vdots & \vdots \\
    \boldsymbol{0} & \boldsymbol{0} & \cdots & \boldsymbol{0} &
    \boldsymbol{\alpha}^{2t-4s+6} & c_{s-1,1}^{(2)} \boldsymbol{\alpha}^{2t-4s+4} \\[4pt]
    \boldsymbol{\alpha}^{2t-2s+3} & \boldsymbol{0} & \cdots & \boldsymbol{0} &
    c_{1,1}^{(3)} \boldsymbol{\alpha}^{2t-4s+7} &
    c_{1,2}^{(3)} \boldsymbol{\alpha}^{2t-4s+5} \\
    \vdots & \vdots & \vdots & \vdots & \vdots & \vdots \\
    \boldsymbol{0} & \boldsymbol{0} & \cdots &
    \boldsymbol{\alpha}^{2t-4s+9} &
    c_{s-2,1}^{(3)} \boldsymbol{\alpha}^{2t-4s+7} &
    c_{s-2,2}^{(3)} \boldsymbol{\alpha}^{2t-4s+5} \\
    \vdots & \vdots & \vdots & \vdots & \vdots & \vdots \\[4pt]
    \boldsymbol{\alpha}^{2t-3} &
    c_{1,1}^{(2s-3)} \boldsymbol{\alpha}^{2t-5} & \cdots &
    c_{1,s-3}^{(2s-3)} \boldsymbol{\alpha}^{2t-2s+3} &
    c_{1,s-2}^{(2s-3)} \boldsymbol{\alpha}^{2t-2s+1} &
    c_{1,s-1}^{(2s-3)} \boldsymbol{\alpha}^{2t-2s-1} \\[2pt]
    \boldsymbol{\alpha}^{2t-2} &
    c_{1,1}^{(2s-2)} \boldsymbol{\alpha}^{2t-4} & \cdots &
    c_{1,s-3}^{(2s-2)} \boldsymbol{\alpha}^{2t-2s+4} &
    c_{1,s-2}^{(2s-2)} \boldsymbol{\alpha}^{2t-2s+2} &
    c_{1,s-1}^{(2s-2)} \boldsymbol{\alpha}^{2t-2s}
  \end{pmatrix},
\end{equation}
where the
coefficients $c_{\ell,j}^{(i-2t+2s)} \in \mathbb{F}_{q}$ are uniquely determined
 by the relation
\[
  M\bigl(0,A-1,i-t+1,t-1\bigr)
  \begin{pmatrix}
    c_{1,1}^{(i-2t+2s)} & \cdots & c_{1,B}^{(i-2t+2s)} \\
    \vdots & \ddots & \vdots \\
    c_{A,1}^{(i-2t+2s)} & \cdots & c_{A,B}^{(i-2t+2s)}
  \end{pmatrix}
  =
  M\bigl(A,s-1,i-t+1,t-1,A\bigr)
\]
for each  $2t-2s+1 \le i \le 2t-2,1\leq\ell\leq A,1\leq j\leq B$. 
\end{theorem}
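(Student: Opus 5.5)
We would start from the spanning set $\{EV_{\boldsymbol{\alpha}}(x^a)\star EV_{\boldsymbol{\alpha}}(x^b):0\le a\le b\le t-1\}$ of $\widehat{\mathcal{C}}$. The $k$-th row block (for $0\le k\le s-1$) of $EV_{\boldsymbol{\alpha}}(x^a)\star EV_{\boldsymbol{\alpha}}(x^b)$ is $\binom{a}{k}\binom{b}{k}\boldsymbol{\alpha}^{a+b-2k}$. We therefore group these generators by total degree $i=a+b$, with $0\le i\le 2t-2$. For fixed $i$, every generator is a linear combination of the $s$ block vectors
\[
\boldsymbol{e}_{i,k}:=(\boldsymbol{0},\dots,\boldsymbol{\alpha}^{i-2k},\dots,\boldsymbol{0}),\qquad k=0,\dots,s-1,
\]
with $\boldsymbol{\alpha}^{i-2k}$ in block $k$. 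The coefficient is $\binom{a}{k}\binom{i-a}{k}$.

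For fixed $i$, the admissible pairs are those with $\max(0,i-t+1)\le a\le\lfloor i/2\rfloor$. Writing $a=i-t+1+\ell$ and $b=t-1-\ell$, the coefficient matrix has rows $\bigl(\binom{i-t+1+\ell}{k}\binom{t-1-\ell}{k}\bigr)_k$. This is precisely the matrix $M(0,s-1,i-t+1,t-1,\cdot)$ from Section~2, and it has $A=t-\lceil i/2\rceil$ rows.

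\textbf{Small $i$.} If $i\le 2t-2s$, then $A\ge s$. The $s\times s$ block $M(0,s-1,i-t+1,t-1)$ is invertible by Lemma~\ref{Lem:2.1}: the factors $j-i+1-w-2u$ are nonzero integers of absolute value below $p$, because $p\ge 2s$ and $t\ge 2s$ keep everything in range. We must check these ranges carefully, and use the fact that for $i<t-1$ one can shift $a$ to start at $0$. Invertibility means each $\boldsymbol{e}_{i,k}$ itself lies in $\widehat{\mathcal C}$. Collecting, block $k$ contains every power $\boldsymbol{\alpha}^{e}$ with $0\le e\le 2t-2s-2k$. This gives the diagonal blocks $M_{2t-2s-2k}(\boldsymbol{\alpha})$.

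\textbf{Large $i$.} If $2t-2s+1\le i\le 2t-2$, then $A<s$. The $A$ generators span the row space of $[\,M(0,A-1,\cdot)\mid M(A,s-1,\cdot,A)\,]$. The left block is square and invertible, again by Lemma~\ref{Lem:2.1}, so we left-multiply by its inverse. The row space is then spanned by $[\,I_A\mid C^{(i-2t+2s)}\,]$, where $C$ is defined by the stated relation; this also gives the uniqueness of the $c_{\ell,j}$. Hence row $\ell$ gives $\boldsymbol{e}_{i,\ell-1}+\sum_j c_{\ell,j}\boldsymbol{e}_{i,A-1+j}$, which is exactly the displayed row with $\boldsymbol{\alpha}^{i-2(\ell-1)}$ in block $\ell-1$ and $c_{\ell,j}\boldsymbol{\alpha}^{i-2(A-1+j)}$ in the later blocks.

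\textbf{Simplification.} In the display, some entries are replaced by $\boldsymbol{0}$; for instance, for $i=2t-2s+1$ only the last block carries a coefficient. This follows because the exponent $i-2k$ in those blocks is at most $2t-2s-2k$. Such components already lie in the span of the diagonal $M$ blocks and can be subtracted off. We will verify this exponent bookkeeping block by block.

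\textbf{Main obstacle.} The main obstacle is the nonvanishing mod $p$ of the determinants from Lemma~\ref{Lem:2.1} for every $i$, since Lemma~\ref{Lem:2.1} requires $j\ge b-a+1$. This means bounding each factor $t-1-(i-t+1)+1-w-2u$ in absolute value below $p$ and away from $0$, together with the binomial prefactors. Parity of $i$ matters here, and we will treat $i$ even and $i$ odd separately.
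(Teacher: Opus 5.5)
\textbf{Gap in the range $i\le 2t-2s$: the wrong $s\times s$ minor.} Your treatment of $2t-2s+1\le i\le 2t-2$ is the paper's Case~4 and is fine. There the $A$ admissible pairs are all the rows, and the factors $2t-i-1-w-2u$ of $\det(M(0,A-1,i-t+1,t-1))$ lie in $[1,2s-4]$.

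The problem is the range $i\le 2t-2s$. There you invert the \emph{first} $s$ admissible rows, i.e.\ $M(0,s-1,i-t+1,t-1)$, or, after shifting to $a=0$ when $i<t-1$, $M(0,s-1,0,i)$. By Lemma~\ref{Lem:2.1} the factors of $\det(M(0,s-1,i-t+1,t-1))$ are $2t-i-1-w-2u$, which reach $2t-i-3$. For $i$ near $t$ this is roughly $t$. The theorem only assumes $p\ge 2s$, so the bound ``below $p$'' that you plan to verify is false.

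Concretely, take $s=2$, $p=5$, $t=10$, which is admissible with $q=25$ and $17\le r\le 25$:
\begin{itemize}
\item for $i=12$, the first two rows are $(1,27)$ and $(1,32)$, so $\det(M(0,1,3,9))=5\equiv 0 \pmod 5$;
\item for $i=6$, the rows $a=0,1$ are $(1,0)$ and $(1,5)$, again singular mod $5$.
\end{itemize}

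\textbf{The missing idea.} In the paper's Case~2, the minor is built from the $s$ rows closest to the middle, $a=\lfloor i/2\rfloor-s+1,\dots,\lfloor i/2\rfloor$, i.e.\ $M(0,s-1,\lfloor i/2\rfloor-s+1,\lceil i/2\rceil+s-1)$. Then $j-i+1=2s-1+\epsilon$ with $\epsilon=\lceil i/2\rceil-\lfloor i/2\rfloor$. Every factor $2s-1+\epsilon-w-2u$ therefore lies in $[1,2s-2]$, independently of $t$; this is where parity enters. These rows are admissible exactly when $2s-2\le i\le 2t-2s$, so this one choice covers both the paper's Case~2 and Case~3. (The paper's Case~3 is worded with the indices $(i-t+1,t-1)$, but it is the middle-row choice that makes it go through.)

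\textbf{Small $i$ and the row count.} For $i<2s-2$ there are fewer than $s$ admissible rows. Your count $A$ is correct only for $i\ge t-1$; in general it is $\lfloor i/2\rfloor-\max(0,i-t+1)+1$. In this range you should use the lower-triangular matrix $M(0,\lfloor i/2\rfloor,0,i)$, whose diagonal entries $\binom{i-a}{a}$ are nonzero since $i<p$. The blocks $k>\lfloor i/2\rfloor$ are identically zero.

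\textbf{The ``Simplification'' step should be dropped.} It rests on a misreading of the display. For $i\ge 2t-2s+1$ the displayed rows are exactly the rows of $[\,I_A\mid C^{(i-2t+2s)}\,]$, and the zeros come from $I_A$. For $i=2t-2s+1$ one has $B=1$, hence a single coefficient column. Nothing needs to be subtracted. Moreover, your justification fails in that range: there $i-2k>2t-2s-2k$, so those components do \emph{not} lie in the span of the diagonal blocks $M_{2t-2s-2k}(\boldsymbol{\alpha})$.
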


\begin{proof}
As recalled earlier, the Schur square code $\widehat{\mathcal{C}}$ is generated by
\begin{equation*}
  \Bigl(
    \boldsymbol{\alpha}^{j_{1}+j_{2}},
    \binom{j_{1}}{1}\binom{j_{2}}{1}\boldsymbol{\alpha}^{j_{1}+j_{2}-2},
    \dots,
    \binom{j_{1}}{s-1}\binom{j_{2}}{s-1}\boldsymbol{\alpha}^{j_{1}+j_{2}-2(s-1)}
  \Bigr)_{0 \le j_{1} \le j_{2} \le t-1},
\end{equation*}
where we use the convention that $\binom{j_{\ell}}{k}=0$ if $j_{\ell}<k$,
for $\ell=1,2$ and $0 \le k \le s-1$.

For a fixed integer $i$ with $0 \le i \le 2t-2$, we collect all rows whose
first block is $\boldsymbol{\alpha}^{i}$, i.e.\ all pairs $(j_{1},j_{2})$ with
$j_{1}+j_{2}=i$ and $0 \le j_{1} \le j_{2} \le t-1$.  After reordering these
rows, the submatrix corresponding to the first block
$\boldsymbol{\alpha}^{i}$ can be written as
\begin{equation}\label{equ:gen-alpha-i}
  \begin{pmatrix}
    \boldsymbol{\alpha}^{i} &
    \binom{0}{1}\binom{i}{1}\boldsymbol{\alpha}^{i-2} & \cdots &
    \binom{0}{s-1}\binom{i}{s-1}\boldsymbol{\alpha}^{i-2(s-1)} \\[2pt]
    \boldsymbol{\alpha}^{i} &
    \binom{1}{1}\binom{i-1}{1}\boldsymbol{\alpha}^{i-2} & \cdots &
    \binom{1}{s-1}\binom{i-1}{s-1}\boldsymbol{\alpha}^{i-2(s-1)} \\
    \vdots & \vdots & \ddots & \vdots \\
    \boldsymbol{\alpha}^{i} &
    \binom{\lfloor i/2 \rfloor}{1}\binom{\lceil i/2 \rceil}{1}\boldsymbol{\alpha}^{i-2}
    & \cdots &
    \binom{\lfloor i/2 \rfloor}{s-1}\binom{\lceil i/2 \rceil}{s-1}
    \boldsymbol{\alpha}^{i-2(s-1)}
  \end{pmatrix},
\end{equation}
where $\binom{j}{\ell}\binom{i-j}{\ell}\boldsymbol{\alpha}^{i-2\ell}=\boldsymbol{0}$ if $\min\{j,i-j\}<\ell$ or $i<2\ell$ for all $0\le\ell\le s-1$ and $0\le j\le\lfloor\frac{i}{2}\rfloor$.

Next, we divide $i\in [0, 2t-2]$ into four ranges.

\medskip
\noindent\textbf{Case 1: $0 \le i \le 2s-2 < t$.}
In this range the binomial coefficients vanish for $s>\lfloor i/2\rfloor$, so
matrix~\eqref{equ:gen-alpha-i} has the form
\begin{equation*}
  \begin{pmatrix}
    \boldsymbol{\alpha}^{i} & \boldsymbol{0} & \cdots & \boldsymbol{0}
                              & \cdots & \boldsymbol{0} \\
    \boldsymbol{\alpha}^{i} &
    \binom{1}{1}\binom{i-1}{1}\boldsymbol{\alpha}^{i-2} & \cdots &
    \boldsymbol{0} &  \cdots & \boldsymbol{0} \\
    \vdots & \vdots & \ddots & \vdots  & \ddots & \vdots \\
    \boldsymbol{\alpha}^{i} &
    \binom{\lfloor i/2 \rfloor}{1}\binom{\lceil i/2 \rceil}{1}\boldsymbol{\alpha}^{i-2}
    & \cdots &
    \binom{\lfloor i/2 \rfloor}{\lfloor i/2 \rfloor}
    \binom{\lceil i/2 \rceil}{\lfloor i/2 \rfloor}
    \boldsymbol{\alpha}^{i-2\lfloor i/2 \rfloor}  & \cdots & \boldsymbol{0}
  \end{pmatrix}.
\end{equation*}
The $(\lfloor i/2 \rfloor+1)\times(\lfloor i/2 \rfloor+1)$ submatrix of
scalar coefficients is exactly
\[
  M\bigl(0,\lfloor i/2 \rfloor,0,i\bigr),
\]
whose determinant is
\[
  \det\bigl(M(0,\lfloor i/2 \rfloor,0,i)\bigr)
  = \prod_{j=0}^{\lfloor i/2 \rfloor}
      \binom{i-j}{j}.
\]
Since $i\leq 2s-2< p$, all these binomial coefficients are
nonzero in $\mathbb{F}_{p}$, hence the determinant is nonzero and the matrix
is invertible.  Therefore matrix~\eqref{equ:gen-alpha-i} is row–equivalent to
\begin{equation}\label{equ:case-i-0-2s-2}
  \begin{pmatrix}
    \boldsymbol{\alpha}^{i} & \boldsymbol{0} & \cdots & \boldsymbol{0}
                            & \boldsymbol{0} & \cdots & \boldsymbol{0} \\
    \boldsymbol{0} & \boldsymbol{\alpha}^{i-2} & \cdots & \boldsymbol{0}
                   & \boldsymbol{0} & \cdots & \boldsymbol{0} \\
    \vdots & \vdots & \ddots & \vdots & \vdots & \ddots & \vdots \\
    \boldsymbol{0} & \boldsymbol{0} & \cdots
                   & \boldsymbol{\alpha}^{i-2\lfloor i/2 \rfloor}
                   & \boldsymbol{0} & \cdots & \boldsymbol{0}
  \end{pmatrix}.
\end{equation}

\medskip
\noindent\textbf{Case 2: $2s-1 \le i \le t-1$.}
In this range we obtain
\begin{equation*}
  \begin{pmatrix}
    \boldsymbol{\alpha}^{i} & \boldsymbol{0} & \cdots & \boldsymbol{0} \\
    \boldsymbol{\alpha}^{i} &
    \binom{1}{1}\binom{i-1}{1}\boldsymbol{\alpha}^{i-2} & \cdots &
    \boldsymbol{0} \\
    \vdots & \vdots & \ddots & \vdots \\
    \boldsymbol{\alpha}^{i} &
    \binom{s-1}{1}\binom{i-s+1}{1}\boldsymbol{\alpha}^{i-2} & \cdots &
    \binom{s-1}{s-1}\binom{i-s+1}{s-1}\boldsymbol{\alpha}^{i-2(s-1)} \\
    \vdots & \vdots & \ddots & \vdots \\
    \boldsymbol{\alpha}^{i} &
    \binom{\lfloor i/2 \rfloor}{1}\binom{\lceil i/2 \rceil}{1}\boldsymbol{\alpha}^{i-2}
    & \cdots &
    \binom{\lfloor i/2 \rfloor}{s-1}\binom{\lceil i/2 \rceil}{s-1}\boldsymbol{\alpha}^{i-2(s-1)}
  \end{pmatrix}.
\end{equation*}
The $s\times s$ submatrix of coefficients in the last $s$ rows is 
\[
  M\Bigl(0,s-1,\lfloor i/2 \rfloor-s+1,\lceil i/2 \rceil + s-1\Bigr).
\]
By Lemma~\ref{Lem:2.1} its determinant equals
\[
  \prod_{w=0}^{s-2}
    \frac{\displaystyle\prod_{u=1}^{s-1-w}
          \bigl(2s+\lceil i/2 \rceil - \lfloor i/2 \rfloor -1 -w-2u\bigr)}
         {(s-1-w)!},
\]
and all factors lie in $\{1,\dots,p-1\}$ because of $p>2s-2$.
Hence the determinant is nonzero and matrix~\eqref{equ:gen-alpha-i} is
row–equivalent to
\begin{equation}\label{equ:case-i-2s-1--t-1}
  \begin{pmatrix}
    \boldsymbol{\alpha}^{i} & \boldsymbol{0} & \cdots & \boldsymbol{0} \\
    \boldsymbol{0} & \boldsymbol{\alpha}^{i-2} & \cdots & \boldsymbol{0} \\
    \vdots & \vdots & \ddots & \vdots \\
    \boldsymbol{0} & \boldsymbol{0} & \cdots &
    \boldsymbol{\alpha}^{i-2(s-1)}
  \end{pmatrix}.
\end{equation}

\medskip
\noindent\textbf{Case 3: $t \le i \le 2t-2s$.}
In this range the same pattern as in Case~2 occurs, but with shifted indices
$\bigl(i-t+1,t-1\bigr)$ instead of
$\bigl(\lfloor i/2\rfloor-s+1,\lceil i/2\rceil+s-1\bigr)$.  By applying
Lemma~\ref{Lem:2.1} to the corresponding $s\times s$ submatrix, we again obtain
a nonzero determinant, and matrix~\eqref{equ:gen-alpha-i} is row–equivalent to
\begin{equation}\label{equ:case-i-t--2t-2s}
  \begin{pmatrix}
    \boldsymbol{\alpha}^{i} & \boldsymbol{0} & \cdots & \boldsymbol{0} \\
    \boldsymbol{0} & \boldsymbol{\alpha}^{i-2} & \cdots & \boldsymbol{0} \\
    \vdots & \vdots & \ddots & \vdots \\
    \boldsymbol{0} & \boldsymbol{0} & \cdots &
    \boldsymbol{\alpha}^{i-2(s-1)}
  \end{pmatrix}.
\end{equation}
(We omit the details, as they are completely analogous to Case~2.)

\medskip
\noindent\textbf{Case 4: $2t-2s+1 \le i \le 2t-2$.}
In this final range we have $A = t-\lceil i/2\rceil < s$ and
$1 \le B \le s-1$.  The submatrix~\eqref{equ:gen-alpha-i} can be rewritten as
\begin{equation*}
  \begin{pmatrix}
    \boldsymbol{\alpha}^{i} &
    \binom{i-t+1}{1}\binom{t-1}{1}\boldsymbol{\alpha}^{i-2} & \cdots &
    \binom{i-t+1}{s-1}\binom{t-1}{s-1}\boldsymbol{\alpha}^{i-2(s-1)} \\
    \boldsymbol{\alpha}^{i} &
    \binom{i-t+2}{1}\binom{t-2}{1}\boldsymbol{\alpha}^{i-2} & \cdots &
    \binom{i-t+2}{s-1}\binom{t-2}{s-1}\boldsymbol{\alpha}^{i-2(s-1)} \\
    \vdots & \vdots & \ddots & \vdots \\
    \boldsymbol{\alpha}^{i} &
    \binom{\lfloor i/2 \rfloor}{1}\binom{\lceil i/2 \rceil}{1}\boldsymbol{\alpha}^{i-2}
    & \cdots &
    \binom{\lfloor i/2 \rfloor}{s-1}\binom{\lceil i/2 \rceil}{s-1}\boldsymbol{\alpha}^{i-2(s-1)}
  \end{pmatrix}.
\end{equation*}
The $A\times A$ submatrix of coefficients in the first $A$ rows and first $A$
nontrivial columns is
\[
  M\bigl(0,A-1,i-t+1,t-1\bigr).
\]
From Lemma~\ref{Lem:2.1}, we have

\[
\det(  M\bigl(0,A-1,i-t+1,t-1\bigr))= \prod_{w=0}^{A-2}
    \frac{\displaystyle\prod_{u=1}^{A-1-w}\bigl(2t-i-1-w-2u\bigr)}
         {(A-1-w)!}.
\]
Since
\[
  1
  \le 2\Bigl\lceil \frac{i}{2} \Bigr\rceil - i + 1
  = 2t-i-1-2(A-1)
  \le 2t-i-1-w-2u
  \le 2t-i-3
  \le 2s-4 < p
\]
for all $0\le w\le A-2$ and $1\le u\le A-1-w$, we obtain 
$\det\bigl(M(0,A-1,i-t+1,t-1)\bigr)\neq 0$ in $\mathbb{F}_{p}$.
Therefore, matrix~\eqref{equ:gen-alpha-i} is row–equivalent to
\begin{equation}\label{equ:case-i-2t-2s+1--2t-2}
  \begin{pmatrix}
    \boldsymbol{\alpha}^{i} & \boldsymbol{0} & \cdots & \boldsymbol{0} &
    c_{1,1}^{(i-2t+2s)}\boldsymbol{\alpha}^{i-2A} & \cdots &
    c_{1,B}^{(i-2t+2s)}\boldsymbol{\alpha}^{i-2(s-1)} \\
    \boldsymbol{0} & \boldsymbol{\alpha}^{i-2} & \cdots & \boldsymbol{0} &
    c_{2,1}^{(i-2t+2s)}\boldsymbol{\alpha}^{i-2A} & \cdots &
    c_{2,B}^{(i-2t+2s)}\boldsymbol{\alpha}^{i-2(s-1)} \\
    \vdots & \vdots & \ddots & \vdots & \vdots & \ddots & \vdots \\
    \boldsymbol{0} & \boldsymbol{0} & \cdots &
    \boldsymbol{\alpha}^{i-2(A-1)} &
    c_{A,1}^{(i-2t+2s)}\boldsymbol{\alpha}^{i-2A} & \cdots &
    c_{A,B}^{(i-2t+2s)}\boldsymbol{\alpha}^{i-2(s-1)}
  \end{pmatrix},
\end{equation}
where the coefficients $c_{\ell,j}^{(i-2t+2s)}$ are uniquely determined by
\[
  M\bigl(0,A-1,i-t+1,t-1\bigr)
  \begin{pmatrix}
    c_{1,1}^{(i-2t+2s)} & \cdots & c_{1,B}^{(i-2t+2s)} \\
    \vdots & \ddots & \vdots \\
    c_{A,1}^{(i-2t+2s)} & \cdots & c_{A,B}^{(i-2t+2s)}
  \end{pmatrix}
  =
  M\bigl(A,s-1,i-t+1,t-1,A\bigr).
\]

\medskip
Collecting the row–equivalent forms obtained in
\eqref{equ:case-i-0-2s-2},
\eqref{equ:case-i-2s-1--t-1},
\eqref{equ:case-i-t--2t-2s},
and \eqref{equ:case-i-2t-2s+1--2t-2} for all $0 \le i \le 2t-2$, and ordering
the rows by increasing degree of the first block, we obtain a matrix whose row vectors generate the code $\widehat{\mathcal{C}}$ and which has exactly the block form
displayed in~\eqref{matrix:gen}. This completes the proof.
\end{proof}
%\begin{remark}
 %   When $s=1$, there are mainly Cases $2$ and $3$. Consequently, the Schur square code $\widehat{\mathcal{C}}$ of code $\mathcal{C}$ has a generator matrix $M_{2t-2}(\boldsymbol{\alpha})$. Hence, the dimension of $\widehat{\mathcal{C}}$ is $2t-1$. {\color{red} This coincides ... So we will focus on the case $s\geq 2$.}
%\end{remark}

Based on the above generating system, we present the main result on the dimensions of Schur squares of HRS codes.
\begin{theorem}\label{thm:dim-Schur-square-HRS}
Let $q = p^{m}$ with $p$ a prime, and let
$r,s,t$ be integers such that
  {$s\geq 2, 2s \le t \le rs - 1$ }
  and  $r \ge 2t - 2s + 1$.  Let $\boldsymbol{\alpha} = (\alpha_{1},\dots,\alpha_{r}) \in \mathbb{F}_{q}^{r}$ 
with pairwise distinct entries, and let
  $\mathcal{C} = HRS_{t}(\boldsymbol{\alpha},s)$
be the HRS code of length $rs$ and dimension $t$. Let $\widehat{\mathcal{C}}$ be the Schur square code of $\mathcal{C}$. Then the followings hold:
\begin{enumerate}
%  \item[$(i)$] If $t<2s$ and $p\geq t$, then then the Schur square code $\widehat{\mathcal{C}}$ of code $\mathcal{C}$
 %  has dimension
  % $$\dim(\widehat{\mathcal{C}})=t(t+1)/2.$$
  
  \item [$(i)$]
  If $p \ge 2s$, then
  \[
    (2t - 3s + 2)s
    \;\le\;
    \dim(\widehat{\mathcal{C}})\leq
    (2t - 2s + 1)s.
  \]

  \item[$(ii)$] If $p \ge t$, then
  \[
    \dim(\widehat{\mathcal{C}})=
     (2t - 2s + 1)\,s.
  \]
\end{enumerate}
In particular, when $p \ge t=2s$, we have
\[
  \dim(\widehat{\mathcal{C}})
  = \frac{t(t+1)}{2}.
\]
\end{theorem}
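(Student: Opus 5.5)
The plan is to work entirely with the generating matrix \eqref{matrix:gen} from Theorem~\ref{thm:Schur-square-structure}, which is available because $p\ge 2s$ in (i) and $p\ge t\ge 2s$ in (ii). First I count its rows. Column block $j$ ($0\le j\le s-1$) corresponds to the $j$-th hyperderivative. The block-diagonal part contributes $M_{2t-2s-2j}(\boldsymbol{\alpha})$ in block $j$, which has $2t-2s-2j+1$ rows, so this part has
\[
\sum_{j=0}^{s-1}(2t-2s-2j+1)=(2t-2s+1)s-s(s-1)=(2t-3s+2)s
\]
rows in total. The Case~4 rows, for $2t-2s+1\le i\le 2t-2$, contribute $A=t-\lceil i/2\rceil$ rows for each $i$. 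Pairing $i=2t-2k-1$ with $i=2t-2k$ for $k=1,\dots,s-1$ gives $\sum 2k=s(s-1)$ rows. Hence the matrix has exactly $(2t-2s+1)s$ rows, which is the upper bound in (i). For the lower bound, the block-diagonal rows are linearly independent: each $M_{2t-2s-2j}(\boldsymbol{\alpha})$ has at most $2t-2s+1\le r$ rows, so it has full row rank by the Vandermonde determinant $V$ with distinct $\alpha_i$, and the blocks sit in disjoint column blocks. This gives $\dim\widehat{\mathcal C}\ge (2t-3s+2)s$.

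For (ii) I must show that the $s(s-1)$ Case~4 rows are independent modulo the span $W$ of the block-diagonal rows. In block $j$, the space $W$ consists of evaluations of polynomials of degree $\le 2t-2s-2j$. Since $r\ge 2t-2s+1$, all degrees $\le 2t-2s$ are evaluated injectively. I would first reduce to the extremal case by viewing everything at the level of polynomials in each block. In block $j$ the monomials of degree $2t-2s-2j+1,\dots,2t-2s$ (exactly $2j$ of them) are independent modulo $W$. Together these give $\sum_j 2j=s(s-1)$ quotient coordinates, matching the row count. If $r$ is larger, the same set of classes still lies in a quotient, and any higher powers that appear only add coordinates, which can only help independence. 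So it suffices to show that the square $s(s-1)\times s(s-1)$ matrix, recording the coefficients of the Case~4 rows on these quotient monomials, is nonsingular.

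To do that I would order the Case~4 rows by $i$ and the quotient coordinates by exponent. The row for $(i,\ell)$ has $\boldsymbol{\alpha}^{i-2(\ell-1)}$ in block $\ell-1$ and $c^{(i-2t+2s)}_{\ell,j'}\boldsymbol{\alpha}^{i-2A-2(j'-1)}$ in blocks $A,\dots,s-1$. Exponents appearing in block $j$ all have the form $i-2j$, so each $i$ hits precisely one quotient monomial per block (when it lies in range). Sorting by $i$ then makes the matrix block triangular. The diagonal block for a given $i$ consists of the pivot entries together with the $c$-coefficients on the blocks whose exponent $i-2j$ falls in the quotient window $[2t-2s-2j+1,\,2t-2s]$. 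After eliminating the pivot-type columns, I expect this diagonal block to be exactly the $B\times B$ leading submatrix of $(c_{\ell,j})$ when $i\le 2t-s-1$, and the $A\times A$ trailing submatrix when $i\ge 2t-s$. These are precisely the submatrices proved invertible in Lemma~\ref{Lem:2.2}(i),(ii), where $p\ge t$ is what ensures the relevant binomial factors do not vanish.

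The main obstacle is this bookkeeping step. I need to verify, for each parity of $i$ and each $\ell$, which exponents land inside the quotient window. I also need to check that entries with exponent above $2t-2s$ (possible when $r>2t-2s+1$), or below the window (absorbed by $W$), do not destroy triangularity. I would check the pattern first on $s=2,3$ and then argue in general. Finally, when $t=2s$ the result gives $(2s+1)s=\frac{t(t+1)}{2}$.
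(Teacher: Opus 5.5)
Your proposal is correct and is essentially the paper's proof. Part (i) is the same row count plus Vandermonde argument. In (ii), the window columns you assign to $i$ (block $j\in\{B,\dots,s-1\}$, exponent $i-2j\le 2t-2s$) are exactly the ``new degree layers'' the paper exhibits after row-reducing each Case~4 block with Lemma~\ref{Lem:2.2}. The resulting diagonal block is $\left(\begin{smallmatrix}0 & C_{\mathrm{top}}\\ I_{A-B} & *\end{smallmatrix}\right)$ with $C_{\mathrm{top}}=(c_{\ell,j})_{1\le \ell,j\le B}$ when $A\ge B$, and the trailing $A\times A$ block of $(c_{\ell,j})$ when $A\le B$.

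One point in your bookkeeping needs correcting: exponents $i-2j>2t-2s$ occur for every $r$, not only when $r>2t-2s+1$. When $i-2j\ge r$ you must rewrite $\boldsymbol{\alpha}^{i-2j}$ using $\prod_{k}(x-\alpha_k)$. This only creates exponents $e<i-2j$, i.e.\ window columns belonging to $i'=e+2j<i$. So ordering the rows by increasing $i$ keeps the matrix block lower-triangular. The paper's ``degree layer'' remark also leaves this detail implicit.
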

%\begin{remark}
%In fact, analogous to the Schur square of RS code $RS_{t}(\left\{\alpha_{1},\cdots,\alpha_{r}\right\},s)$ having dimension $\min\left\{r,2t-1\right\}$, the Schur product of HRS code $HRS_{t}({\alpha_{1},\cdots,\alpha_{r}},s)$ has dimension $\min\left\{rs,(2t-2s+1)s\right\}$. 
%\end{remark}

\begin{proof}%[Proof of Theorem~\ref{thm:dim-Schur-square-HRS}] 
By Theorem~\ref{thm:Schur-square-structure}, the Schur square code $\widehat{\mathcal{C}}$ is generated by the~(\ref{matrix:gen}).
\begin{enumerate}
    \item[(i)] We first consider the upper block-diagonal part $$\text{diag}(M_{2t-2s}(\boldsymbol{\alpha}),M_{2t-2s-2}(\boldsymbol{\alpha}),\cdots,M_{2t-4s+2}(\boldsymbol{\alpha})).$$Since the entries of \(\boldsymbol{\alpha}\) are pairwise distinct and $r \ge 2t-2s+1 \ge 2t-2s-2j+1$, each block \(M_{2t-2s-2j}(\boldsymbol{\alpha})\) has full row rank, i.e.
\[
\rank(M_{2t-2s-2j}(\boldsymbol{\alpha})) = 2t-2s-2j+1,\quad j=0,\cdots,s-1.
\]
As these blocks lie in disjoint column blocks, so
\[
(2t-3s+2)s=
\sum_{j=0}^{s-1}(2t-2s-2j+1)\le \dim(\widehat{\mathcal{C}})
.
\]
On the other hand, since the number of rows of the matrix in \eqref{matrix:gen} is
\[
(2t-3s+2)s+2\sum_{i=1}^{s-1}i=s(2t-2s+1),
\]
we obtain $\dim(\widehat{\mathcal{C}})\le s(2t-2s+1)$. Therefore,
\[
(2t-3s+2)s \le\dim(\widehat{\mathcal{C}})\le (2t-2s+1)s.
\]
    \item[(ii)] For $p\ge t$, it follows from the proof of (i) that it suffices to show that the rows of the matrix in \eqref{matrix:gen} are linearly independent. According to the proof of Theorem~\ref{thm:Schur-square-structure}, for each $2t-2s+1\le i \le 2t-2$, the corresponding submatrix is of the form \begin{equation*}
  \begin{pmatrix}
    \boldsymbol{\alpha}^{i} & \boldsymbol{0} & \cdots & \boldsymbol{0} &
    c_{1,1}^{(i-2t+2s)}\boldsymbol{\alpha}^{i-2A} & \cdots &
    c_{1,B}^{(i-2t+2s)}\boldsymbol{\alpha}^{i-2(s-1)} \\
    \boldsymbol{0} & \boldsymbol{\alpha}^{i-2} & \cdots & \boldsymbol{0} &
    c_{2,1}^{(i-2t+2s)}\boldsymbol{\alpha}^{i-2A} & \cdots &
    c_{2,B}^{(i-2t+2s)}\boldsymbol{\alpha}^{i-2(s-1)} \\
    \vdots & \vdots & \ddots & \vdots & \vdots & \ddots & \vdots \\
    \boldsymbol{0} & \boldsymbol{0} & \cdots &
    \boldsymbol{\alpha}^{i-2(A-1)} &
    c_{A,1}^{(i-2t+2s)}\boldsymbol{\alpha}^{i-2A} & \cdots &
    c_{A,B}^{(i-2t+2s)}\boldsymbol{\alpha}^{i-2(s-1)}
  \end{pmatrix}
\end{equation*}
where $A = t-\lceil i/2\rceil$ and $B=s-t+\lceil i/2\rceil$. Moreover, Lemma~\ref{Lem:2.2} implies that:
    \begin{enumerate}
        \item For $2t-2s+1\le i \le 2t-s-1$,  Lemma~\ref{Lem:2.2}(i) shows that the \(B\times B\) submatrix
  \[
    \bigl(c_{\ell,j}^{(i-2t+2s)}\bigr)_{1\le \ell,j\le B}
  \]
  is invertible. Therefore, in this case, \eqref{equ:case-i-2t-2s+1--2t-2} is row-equivalent to
     \small{   \begin{equation*}
  \begin{pmatrix}
    \boldsymbol{\alpha}^{i} & \boldsymbol{0} & \cdots & \boldsymbol{0} & \boldsymbol{0}
     &  \cdots&  \boldsymbol{0} & c_{1,1}^{(i-2t+2s)}\boldsymbol{\alpha}^{i-2A} & \cdots &
    c_{1,B}^{(i-2t+2s)}\boldsymbol{\alpha}^{i-2(s-1)}\\
    \boldsymbol{0} & \boldsymbol{\alpha}^{i-2} & \cdots & \boldsymbol{0} & \boldsymbol{0}
     & \cdots & \boldsymbol{0} & c_{2,1}^{(i-2t+2s)}\boldsymbol{\alpha}^{i-2A} & \cdots &
    c_{2,B}^{(i-2t+2s)}\boldsymbol{\alpha}^{i-2(s-1)}\\
    \vdots & \vdots & \ddots & \vdots & \vdots & \ddots & \vdots  &\vdots& &\vdots\\
    \boldsymbol{0} & \boldsymbol{0} & \cdots &
    \boldsymbol{\alpha}^{i-2(B-1)} &
   \boldsymbol{0} &\cdots & \boldsymbol{0} & c_{B,1}^{(i-2t+2s)}\boldsymbol{\alpha}^{i-2A} & \cdots &
    c_{B,B}^{(i-2t+2s)}\boldsymbol{\alpha}^{i-2(s-1)} \\
    * & * & \cdots &
    * &  \boldsymbol{\alpha}^{i-2B}
    & \cdots & \boldsymbol{0} & \boldsymbol{0} & \cdots &
    \boldsymbol{0} \\
    \vdots & \vdots & \ddots &
    \vdots & \vdots
    & & \vdots &\vdots &  &
    \vdots \\
    * & * & \cdots &
    * & *
    & \cdots & \boldsymbol{\alpha}^{i-2(A-1)} & \boldsymbol{0} & \cdots &
    \boldsymbol{0} 
  \end{pmatrix},
\end{equation*}}
where each $*$ denotes a scalar multiple of the power vector appearing in the same column. Moreover, \eqref{equ:case-i-2t-2s+1--2t-2} is row-equivalent to a matrix of the form
 \begin{equation*}
 \small{ \begin{pmatrix}
     * & \cdots & * & \boldsymbol{0}
     &  \cdots&  \boldsymbol{0} & \boldsymbol{\alpha}^{i-2A} &\boldsymbol{0} & \cdots &
    \boldsymbol{0}\\
    * & \cdots & * & \boldsymbol{0}
     & \cdots & \boldsymbol{0} & \boldsymbol{0} &  \boldsymbol{\alpha}^{i-2(A+1)} &\cdots &
    \boldsymbol{0}\\
    \vdots & \ddots & \vdots & \vdots & \ddots & \vdots & \vdots  &\vdots& \ddots &\vdots\\
    * & \cdots &
   * &
   \boldsymbol{0} &\cdots & \boldsymbol{0} & \boldsymbol{0} & \boldsymbol{0} & \cdots &
    \boldsymbol{\alpha}^{i-2(s-1)} \\
     * & \cdots &
    * &  \boldsymbol{\alpha}^{i-2B}
    & \cdots & \boldsymbol{0} & \boldsymbol{0} &\boldsymbol{0} & \cdots &
    \boldsymbol{0} \\
    \vdots & \ddots & \vdots &
    \vdots & \ddots
    & \vdots & \vdots &\vdots &  &
    \vdots \\
    * & \cdots &
    * & *
    & \cdots & \boldsymbol{\alpha}^{i-2(A-1)} & \boldsymbol{0} & \boldsymbol{0} & \cdots &
    \boldsymbol{0} 
  \end{pmatrix},}
\end{equation*}
where each $*$ denotes a scalar multiple of the power vector appearing in the same column. Observe that the \(A\) rows displayed above contain the power vectors
\[
\boldsymbol{\alpha}^{i-2A},\,
\boldsymbol{\alpha}^{i-2(A+1)},\,
\dots,\,
\boldsymbol{\alpha}^{i-2(s-1)},\,
\boldsymbol{\alpha}^{i-2B},\,
\boldsymbol{\alpha}^{i-2(B+1)},\,
\dots,\,
\boldsymbol{\alpha}^{i-2(A-1)}.
\]
For each of these rows, the displayed power vector lies in a degree layer which does not occur in any preceding row of the whole matrix. Hence none of these rows belongs to the span of the preceding rows. Therefore these rows contribute exactly $B+(A-B)=A$
new linearly independent rows.
        \item For $2t-s\le i\le 2t-2$, Lemma~\ref{Lem:2.2}(ii) shows that the rightmost $A\times A$ submatrix $
    \bigl(c_{\ell,j}\bigr)_{1 \le \ell \le A,\; B - A + 1 \le j \le B}$ is invertible. Hence, after elementary row operations,  \eqref{equ:case-i-2t-2s+1--2t-2} is row-equivalent to a matrix of the form
  \begin{equation*}
  \begin{pmatrix}
    * & \cdots & * & \boldsymbol{\alpha}^{i-2B} &
    \boldsymbol{0} & \cdots &
    \boldsymbol{0} \\
    * & \cdots & * & \boldsymbol{0} &
    \boldsymbol{\alpha}^{i-2(B+1)} & \cdots &
    \boldsymbol{0} \\
    \vdots & \ddots & \vdots & \vdots & \vdots & \ddots & \vdots \\
    * &\cdots & * & 
    \boldsymbol{0} &
    \boldsymbol{0} & \cdots &
    \boldsymbol{\alpha}^{i-2(s-1)}
  \end{pmatrix},
\end{equation*}
where each $*$ denotes a scalar multiple of the power vector appearing in the same column. Since these power vectors lie in degree layers which do not occur in any preceding row, this block contributes exactly $A$ new independent rows.
    \end{enumerate}
    Therefore, for $p\ge t$, we have $\dim(\widehat{\mathcal{C}})=(2t-2s+1)s.$
\end{enumerate}       
\end{proof}

Before we discuss the importance of the above result, we recall the property of the Schur square of a random linear code. 
\begin{proposition}[{\cite{randriambololona2015linear}}]Let $C$ be selected uniformly at random from the set of $[n(k),k]$-linear codes over $\mathbb{F}_{q}$. 
%Define $t:\mathbb{N}\rightarrow\mathbb{N}$, $t(k):=k(k+1)/2-n(k)$ and $s:\mathbb{N}\rightarrow\mathbb{N}$, $s(k):=n(k)-k(k+1)/2$. 
%If $n(k)\geq k(k+1)/2$, let $s(k):=n(k)-k(k+1)/2$, then there exist constants $\hat{\delta}\in \mathbb{R}_{>0}$ such that for all large enough $k$, the probability of $$\\dim_{\mathbb{F}_{q}}(\hat{C})=\frac{k(k+1)}{2}$$ is at least $1-2^{-\hat{\delta} s(k)}$.
    \begin{enumerate}
        \item  If $n(k)\leq k(k+1)/2$, let $t(k):=k(k+1)/2-n(k)$, then there exist constants $\epsilon,\delta\in \mathbb{R}_{>0}$ such that for all large enough $k$, the probability of $$\dim_{\mathbb{F}_{q}}(\widehat{C})=n(k)$$ is at least $1-2^{-\epsilon k}-2^{-\delta t(k)}$.
        \item If $n(k)\geq k(k+1)/2$, let $s(k):=n(k)-k(k+1)/2$, then there exists a constant $\hat{\delta}\in \mathbb{R}_{>0}$ such that for all large enough $k$, the probability of $$\dim_{\mathbb{F}_{q}}(\widehat{C})=\frac{k(k+1)}{2}$$ is at least $1-2^{-\hat{\delta} s(k)}$.
    \end{enumerate}
\end{proposition}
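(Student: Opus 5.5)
The plan is to recast $\dim_{\mathbb{F}_q}(\widehat{C})$ as the rank of an evaluation map on quadratic forms. Then I would control that rank by adding evaluation points one at a time. Let $G$ be a $k\times n$ generator matrix of $C$ with columns $\boldsymbol{x}_1,\dots,\boldsymbol{x}_n\in\mathbb{F}_q^k$, and put $N=k(k+1)/2$.

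The rows $\boldsymbol{g}_a\star\boldsymbol{g}_b$ span $\widehat{C}$. Hence $\dim\widehat{C}$ equals the rank of the $N\times n$ matrix whose $j$-th column is the Veronese vector $(x_{j,a}x_{j,b})_{a\le b}$. Equivalently,
\[
\dim\widehat{C}=N-\dim W_n,\qquad W_n=\{Q\in \mathrm{Sym}^2:\ Q(\boldsymbol{x}_j)=0\ \text{for all } j\le n\},
\]
where $\mathrm{Sym}^2$ denotes the space of quadratic forms in $k$ variables. A uniformly random $[n,k]$ code differs from a code whose generator matrix has i.i.d.\ uniform columns only by the event that $G$ has rank $<k$. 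This event has probability $O(q^{k-n})$, which is exponentially small in $k$ since $n\ge k$ in the relevant regime. This is one source of the $2^{-\epsilon k}$ term. So I would work with i.i.d.\ uniform columns.

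Next I run the sequential filtration $W_0=\mathrm{Sym}^2\supseteq W_1\supseteq\cdots$. Adding $\boldsymbol{x}_{j+1}$ lowers $\dim W_j$ by exactly one unless $\boldsymbol{x}_{j+1}$ lies in the common zero locus $Z(W_j)\subseteq\mathbb{F}_q^k$. So step $j+1$ fails with probability $|Z(W_j)|/q^k$, conditionally on the past. Both parts reduce to the same estimate.

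\begin{itemize}
\item For part 1, with $n\le N$, we need no failure in all $n$ steps. At each step $\dim W_j=N-j\ge t(k)$.
\item For part 2, with $n\ge N$, we need $W_n=0$. We have $n-N=s(k)$ extra draws, and each draw made while $W_j\neq 0$ succeeds with conditional probability at least $1-\max_{Q\ne0}|Z(Q)|/q^k\ge c>0$. The probability of fewer than $N$ successes among $n$ trials is controlled by a Chernoff/negative-binomial bound. This gives $2^{-\hat\delta s(k)}$, once the step failure probabilities are shown to be small when $\dim W_j$ is large, which is the same estimate as in part 1.
\end{itemize}

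The key lemma, and the main obstacle, is a bound of the form
\[
|Z(W)|/q^k\le q^{-\min(\dim W,\ k)/c'}+(\text{small})
\]
for subspaces $W\subseteq\mathrm{Sym}^2$. The difficulty is that $W$ depends on the random points and may be degenerate, for example spanned by low-rank forms $\ell_1\ell_2$. Then $Z(W)$ can contain a large union of hyperplanes, even though $\dim W$ is big.

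My first attempt would be a union bound over the possible ``bad'' configurations, rather than over individual forms. For each rank $\rho$, the number of forms of rank $\le\rho$ is about $q^{k\rho}$. A nonzero form of rank $\rho$ vanishes on a fraction at most $1/q+q^{-\rho/2}$ of the points, by the classical count of zeros of quadrics. The ranges then split as follows.

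\begin{itemize}
\item High-rank forms, with $\rho\ge C\log_q k$, behave almost like independent fair constraints with bias $1/q$. The union bound $q^{N}(1/q+q^{-\rho/2})^n$ is then summable and gives the $2^{-\delta t(k)}$ and $2^{-\hat\delta s(k)}$ terms.
\item Low-rank forms, with $\rho$ small, are only about $q^{k\rho}$ in number. Each one vanishes at all $n\ge k$ random points with probability at most $(1-c)^n$. The union bound over them gives the $2^{-\epsilon k}$ term.
\end{itemize}

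Combining the two ranges, and taking $k$ large enough that the constants absorb the logarithmic threshold, should yield both statements.
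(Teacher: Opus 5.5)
The paper gives no proof of this proposition; it is quoted from the literature, so your argument has to stand on its own. For part~2 it essentially does, provided you drop the sequential/Chernoff detour. That detour rests on the key lemma, which you yourself observe is false for degenerate $W$. For $n\ge N$ the bad event is exactly $W_n\neq 0$, and your rank-stratified union bound $\sum_{Q\neq 0} z(Q)^n$ handles it directly, where $z(Q)=|Z(Q)|/q^k\le 3/4$. Set $\rho_0=C\log_q k$. Ranks below $\rho_0$ contribute $k^{O(k)}(3/4)^{N+s}$, and ranks at least $\rho_0$ contribute $q^{-s}(1+q^{1-\rho_0/2})^{N+s}=(1+o(1))\,q^{-s(1-o(1))}$ for $C$ large. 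Together these give $2^{-\hat\delta s}$. Note that here the low-rank part must be, and is, absorbed into $2^{-\hat\delta s}$ rather than producing a $2^{-\epsilon k}$ term, which would be too large when $s\gg k$.

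The genuine gap is part~1. For $n\le N$ and $t\ge 1$, the event you union-bound, that some nonzero quadratic form vanishes at all columns, has probability $1$, because $\dim W_n\ge N-n=t$. Your sums reflect this. First, $q^N(1/q+q^{-\rho/2})^n\ge q^{t}$. Second, $q^{k\rho}(1-c)^n$ is enormous when $n$ is of order $k$: for $n\le 2k-2$ a vanishing product $\ell_1\ell_2$ always exists, even though $\dim\widehat C=n$ still holds with high probability. So neither range yields $2^{-\delta t}$ or $2^{-\epsilon k}$, and the key lemma is never established.

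What part~1 requires is not the absence of vanishing forms but the absence of linear dependencies among the vectors $\boldsymbol v(\boldsymbol x_j)=(x_{j,a}x_{j,b})_{a\le b}$. Concretely, you must compare $|W_n|$ with its minimum $q^{N-n}$ instead of with $1$. Since $|W_n|=q^{N-\dim\widehat C}\ge q^{N-n}$, Markov's inequality gives
\[
\Pr[\dim\widehat C<n]\le\frac{q^{n-N}\,\mathbb E|W_n|-1}{q-1},\qquad \mathbb E|W_n|=\sum_{Q}z(Q)^n .
\]
Here $q^{n-N}\mathbb E|W_n|$ is precisely the expected number of $\boldsymbol y\in\mathbb F_q^n$ with $\sum_j y_j\boldsymbol v(\boldsymbol x_j)=\boldsymbol 0$. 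Thus you need
\[
\sum_{Q\neq 0}z(Q)^n\le q^{N-n}\bigl(1+O(2^{-\epsilon k}+2^{-\delta t})\bigr),
\]
an estimate with \emph{relative} precision. The $Q=0$ term produces $q^{-t}$. The nonzero forms must be handled with the actual number of forms of each rank $\rho$ (roughly $q^{k\rho-\binom{\rho}{2}}$) and their exact numbers of zeros. The crude count $q^N$ for all forms of rank at least $\rho_0$ leaves a relative error $(1+q^{1-\rho_0/2})^n-1$, which is only polynomially small in $k$.

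A minor point as well: rank deficiency of $G$ is not exponentially small when $n-k=O(1)$. Conditioning on full rank, however, costs only a factor $\prod_{i\ge 1}(1-q^{-i})^{-1}$, which is harmless.
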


In other words, with high probability, for a random $[n,k]$ linear code  $\mathscr{R}$ over $\mathbb{F}_{q}$, we have $$\dim_{\mathbb{F}_{q}}(\widehat{\mathscr{R}})=\min\left\{n,\binom{k+1}{2}\right\}.$$
By Theorem~\ref{thm:dim-Schur-square-HRS}, HRS codes $HRS_{t}(\boldsymbol{\alpha},s)$ with parameters $p\geq t=2s\geq 4$ and $r\geq t+1$ have square dimension $\frac{t(t+1)}{2}$ which achieves that of random linear codes. So they are expected to resist the Schur-square attack.

%an important application in code-based cryptography is the construction of distinguishers for square-code attacks. By considering the efficiency of code-based cryptosystems, linear codes with strong algebraic structure would be preferable to reduce the key size. However, the strong algebraic structure often makes the dimension of the Schur square significantly smaller than the corresponding value of a random code, which leads to a  distinguisher of the public code from random codes. In view of this, Theorem~\ref{thm:dim-Schur-square-HRS} indicates that HRS codes with certain specific parameters are expected to resist the Schur-square attack.
To finish the paper, we give an example to illustrate our main result and propose some open problems based on the data. 
\begin{example}\label{example}
Let $p, m, r, s, t$ be positive integers with $p$ a prime. Let $\mathbb{F}_{p^m}$ be a finite field of size $p^m$. Randomly choose a $r$-tuple $\boldsymbol{\alpha}=(\alpha_{1},\cdots,\alpha_{r})\in(\mathbb{F}_{p^m})^r$ with pairwise distinct $\alpha_{1},\cdots,\alpha_{r}$. Let $\mathcal{C}=HRS_{t}(\boldsymbol{\alpha},s)$ be the corresponding HRS code. 
Table~\ref{tab:my-table} lists the dimensions of Schur squares $\hat{\mathcal{C}}$ under condition $r\geq 2t-2s+1$ computed by using MAGMA.
%\begin{equation*}
%    \begin{pmatrix}
%    f(\alpha_{1})&f(\alpha_{2})&\cdots&f(\alpha_{r})\\
%    \partial^{(1)} f(\alpha_{1})&\partial^{(1)}f(\alpha_{2})&\cdots&\partial^{(1)}f(\alpha_{r})\\
%    \cdots&\cdots&\ddots&\cdots\\
%    \partial^{(s-1)} f(\alpha_{1})&\partial^{(s-1)}f(\alpha_{2})&\cdots&\partial^{(s-1)}f(\alpha_{r})\\
%\end{pmatrix},
%\end{equation*}
% where $f\in\mathbb{F}_{p^m}[x]_{\leq t-1}$. 
 %Of course, if we vectorize an $s\times r$  matrix into an $rs$-dimensional vector by scanning from left to right and top to bottom, we obtain 
% $$
% \left((f\cdot g)(\boldsymbol{\alpha}),(\partial^{(1)}f\cdot\partial^{(1)}g)(\boldsymbol{\alpha}),\cdots,(\partial^{(s-1)}f\cdot\partial^{(s-1)}g)(\boldsymbol{\alpha})\right),
% $$
% where $(\partial^{(j)}f\cdot\partial^{(j)}g)(\boldsymbol{\alpha})=(\partial^{(j)}f(\alpha_{1})\cdot\partial^{(j)}g(\alpha_{1}),\cdots,\partial^{(j)}f(\alpha_{r})\cdot\partial^{(j)}g(\alpha_{r}))$ for all $0\leq j\leq s-1$.
 
The data from groups $3,4,7,8,15,16,19,20$  verify the correctness of the main result: Theorem~\ref{thm:dim-Schur-square-HRS}~(ii). That is, for $r\geq 2t-2s+1$ and $p\geq t\geq 2s$, the dimension of the Schur square $\widehat{\mathcal{C}}$ equals $(2t-2s+1)s$.
\end{example}

There is another interesting phenomenon behind the data given in Example~\ref{example}: under the condition $p\geq t$, most values of $\dim(\widehat{\mathcal{C}})$ equal $(2t-2s+1)s$ or $t(t+1)/2$ (mainly depend on $t\geq 2s$ or not). Based on the data in Table~\ref{tab:my-table}, we propose the following conjecture.

\begin{conjecture}
    For $r\geq 2t-2s+1$ and $t\leq \min\{p,2s,r\}$, 
     the dimension of the Schur square $\widehat{\mathcal{C}}$ equals $t(t+1)/2$.
\end{conjecture}

\begin{table}[]
 	\centering\begin{footnotesize}
    \begin{threeparttable}
 	\begin{tabular}{|c|c|c|c|c|c|c|c|}
    \hline
&($p,m,r,s,t$) &  $(2t-3s+2)s$ & $\dim(\hat{\mathcal{C}})$ & $(2t-2s+1)s$ & $t(t+1)/2$ &Is $p\geq t$? & Is $t\geq 2s$?\\
      \hline
     1& $(31,3,9,10,8)$ & $--$\tnote{1} & $36$ & $--$& $36$ &Yes& No \\ \hline
     2& $(31,3,20,10,16)$ & $40$ & $136$ & $130$ &$136$&Yes &No \\ \hline
     3&$(31,3,24,10,20)$ & $120$ & $210$ & $210$&$210$ &Yes &Yes($t=2s$) \\ \hline
     4&$(31,3,33,10,24)$ & $200$ & $290$ & $290$&$300$ &Yes&Yes \\ \hline
     5& $(31,3,15,14,12)$ & $--$ & $78$ & $--$& $78$ &Yes& No \\ \hline
     6& $(31,3,26,14,24)$ & $112$ & $300$ & $294$ &$300$&Yes &No \\ \hline
     7&$(31,3,30,14,28)$ & $224$ & $406$ & $406$&$406$ &Yes &Yes($t=2s$) \\ \hline
     8&$(31,3,35,14,30)$ & $280$ & $462$ & $462$&$465$ &Yes&Yes \\ \hline
     9& $(31,3,45,16,37)$ & $448$ & $580$ & $688$& $703$ &No&Yes \\ \hline
     10& $(31,3,67,16,37)$ & $448$ & $687$ & $688$& $703$ &No&Yes \\ \hline
     11& $(31,3,68,16,37)$ & $448$ & $688$ & $688$& $703$ &No&Yes \\ \hline
     12& $(31,3,79,16,37)$ & $448$ & $688$ & $688$& $703$ &No&Yes \\ \hline
      13& $(43,2,11,12,10)$ & $--$ & $55$ & $--$& $55$ &Yes& No \\ \hline
     14& $(43,2,18,12,16)$ & $--$ & $136$ & $108$ &$136$&Yes &No \\ \hline
     15&$(43,2,30,12,24)$ & $168$ & $300$ & $300$&$300$ &Yes &Yes($t=2s$) \\ \hline
     16&$(43,2,35,12,28)$ & $264$ & $396$ & $396$&$406$ &Yes&Yes \\ \hline

      17& $(43,2,17,18,16)$ & $--$ & $136$ & $--$& $136$ &Yes& No \\ \hline
     18& $(43,2,26,18,24)$ & $--$ & $300$ & $234$ &$300$&Yes &No \\ \hline
     19&$(43,2,38,18,36)$ & $360$ & $666$ & $666$&$666$ &Yes &Yes($t=2s$) \\ \hline
     20&$(43,2,46,18,40)$ & $540$ & $810$ & $810$&$820$ &Yes&Yes \\ \hline

     21& $(43,2,72,20,50)$ & $840$ & $1115$ & $1220$& $1275$ &No& Yes \\ \hline
     22& $(43,2,92,20,50)$ & $840$ & $1219$ & $1220$ &$1275$&No &Yes \\ \hline
     23&$(43,2,93,20,50)$ & $840$ & $1220$ & $1220$&$1275$ &No &Yes \\ \hline
     24&$(43,2,100,20,50)$ & $840$ & $1220$ & $1220$&$1275$ &No&Yes \\ \hline
 	\end{tabular}
    \begin{tablenotes}
\footnotesize
\item[1] The symbol ``$--$" denotes negative values where we are intersted in comparing ``$\dim(\widehat{\mathcal{C}})$" with ``$(2t-2s+1)s$" or ``$t(t+1)/2$".

\end{tablenotes}
\end{threeparttable}
 	\caption{The dimensions of Schur square codes $\widehat{\mathcal{C}}$ under condition $r\geq 2t-2s+1$}
 	\label{tab:my-table}
\end{footnotesize} \end{table}

    %   \begin{table}[]
 	%\centering\begin{footnotesize}
 	%%\begin{tabular}{|c|c|c|c|c|c|c|c|}
    %\hline
%&($p,m,r,s,t$) &  $(2t-3s+2)s$ & $\dim(\hat{\mathcal{C}})$ & $(2t-2s+1)s$ & $t(t+1)/2$ &Is $p>t\geq 2s$? & Is $r\geq 2t-2s+1$?\\
 %     \hline
  %   1& $(3,2,6,2,4)$ & $8$ & $9$ & $10$& $10$ &No& Yes \\ \hline
   %  2& $(7,3,37,7,25)$ & $217$ & $161$ & $259$ &$325$&No &Yes \\ \hline
    % 3&$(7,3,212,4,28)$ & $184$ & $196$ & $196$&$406$ &No &Yes\\ \hline
     %4&$(11,1,10,1,4)$ & $7$ & $7$ & $7$&$10$ &Yes &Yes \\ \hline
    % 5&$(11,2,103,10,40)$ & $520$ & $414$ & $610$&$820$ &No&Yes \\ \hline
   %  6&$(11,2,117,6,40)$ & $384$ & $414$ & $414$&$820$&No &Yes \\ \hline
   %  7&$(11,3,160,7,23)$ & $189$ & $210$ & $231$&$276$&No &Yes \\ \hline
   %  8&$(31,2,20,10,20)$ & $120$ & $200$ & $210$&$210$&Yes ($t=2s$)&No \\ \hline
   %  9&$(31,2,21,10,20)$ & $120$ & $210$ & $210$&$210$&Yes ($t=2s$)& Yes\\ \hline
   %  10&$(31,3,87,13,28)$ & $247$ & $403$ & $403$ &$406$&Yes &Yes \\ \hline
   %  11&$(31,3,87,15,34)$ & $375$ & $585$ & $585$ &$595$&No&Yes \\ \hline
   %  12&$(31,3,130,15,70)$ & $1455$ & $1663$ & $1665$& $2485$ &No&Yes\\ \hline
   %  13&$(43,2,88,15,30)$ & $255$ & $465$ & $465$ &$465$&Yes ($t=2s$)&Yes \\ \hline
   %  14&$(43,2,95,20,42)$ & $520$ & $900$ & $900$&$903$&Yes&Yes \\ \hline
   %  15&$(43,3, 100,25,55)$ & $925$ & $1474$ & $1525$ &$1540$&No&Yes\\ \hline
   %  16&$(43,3,44,20,42)$ & $520$ & $880$ & $900$ &$903$&Yes&No\\ \hline
   %  17&$(43,3,46,20,42)$ & $520$ & $900$ & $900$ & $903$&Yes&Yes \\ \hline
 	%\end{tabular}
 %	\caption{The dimensions of Schur square codes $\widehat{\mathcal{C}}$}
 %	\label{tab:my-table}
%\end{footnotesize} \end{table}

\section{Conclusion}
In this paper, we have mainly studied the dimensions of Schur square codes  of HRS codes. A lower bound and an upper bound were obtained. Moreover, we showed that when $p\geq t\geq 2s$ and $t\leq \frac{r+2s-1}{2}$, the square dimension of the HRS code $HRS_{t}(\boldsymbol{\alpha},s)$ reaches the upper bound $(2t-2s+1)s$. It is worth pointing out that for $p\geq t= 2s$ and $t\leq \frac{r+2s-1}{2}$ the results achieve the performance of random codes.  Unfortunately, our results are restricted to the choices of parameters. Therefore, extending the study of the dimensions of Schur squares of HRS codes for general parameters constitutes an interesting and important future work.

\appendices

\section{Proofs of Lemmas~\ref{Lem:2.1}, \ref{Lem:2.3} and~\ref{Lem:2.2}}

Before proving Lemmas~\ref{Lem:2.1}, \ref{Lem:2.3} and~\ref{Lem:2.2}, we first establish the following auxiliary lemma.

\begin{lemma}\label{Lem:A.1}
  For integers $i,k \ge 1$ and $j \ge 0$, we have
  \[
    \binom{i}{k}\binom{j}{k} - \binom{i-1}{k}\binom{j+1}{k}
    = \frac{j - i + 1}{k}\,\binom{i-1}{k-1}\binom{j}{k-1}.
  \]
\end{lemma}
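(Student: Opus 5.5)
Both sides are rational multiples of $\binom{i-1}{k-1}\binom{j}{k-1}$, so I would prove the identity by factoring this common product out of each term, using the two elementary relations
\[
\binom{i}{k}=\frac{i}{k}\binom{i-1}{k-1},\qquad \binom{j+1}{k}=\frac{j+1}{k}\binom{j}{k-1},
\]
together with
\[
\binom{i-1}{k}=\frac{i-k}{k}\binom{i-1}{k-1},\qquad \binom{j}{k}=\frac{j-k+1}{k}\binom{j}{k-1}.
\]
These hold for all integers $i,k\ge 1$ and $j\ge 0$ under the paper's convention that $\binom{n}{m}=0$ when $m>n$ or $m<0$. For instance, when $j<k-1$ both sides of the relations for $j$ vanish. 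When $j=k-1$, we get $\binom{j}{k}=0$, which matches the factor $(j-k+1)/k=0$. The analogous checks for $i$ are the same (when $i=k$, $\binom{i-1}{k}=0$ matches $(i-k)/k=0$). I would first argue the identity over $\mathbb{Q}$ (integers), and then note that it is used mod $p$ only with $k<p$, so dividing by $k$ is legitimate there too.

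Substituting these relations, the left-hand side becomes
\[
\frac{1}{k^{2}}\binom{i-1}{k-1}\binom{j}{k-1}\Bigl[i\,(j-k+1)-(i-k)(j+1)\Bigr].
\]
Expanding the bracket gives $ij-ik+i-ij-i+kj+k=k(j-i+1)$. The bracket therefore contributes a factor $k$, which cancels one factor of $k$ in $1/k^{2}$ and leaves exactly $\frac{j-i+1}{k}\binom{i-1}{k-1}\binom{j}{k-1}$, as claimed.

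I expect the only real subtlety to be the boundary and degenerate cases, where the factoring relations must remain valid under the zero convention. Examples are $i-1<k-1$ (then both sides are $0$, since $\binom{i}{k}=\binom{i-1}{k}=0$ and $\binom{i-1}{k-1}=0$) and $j<k-1$. I will check these cases explicitly rather than rely on the factorial formula. Alternatively, one can observe that both sides are polynomials in $i$ and $j$ when $\binom{n}{k}$ is interpreted as $n(n-1)\cdots(n-k+1)/k!$; this interpretation agrees with the convention for all $n\ge 0$, so the polynomial identity gives the result directly.
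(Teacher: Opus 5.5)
Your proof is correct and follows the paper's route: factor out $\binom{i-1}{k-1}\binom{j}{k-1}$ and simplify $i(j-k+1)-(i-k)(j+1)=k(j-i+1)$. Your explicit check that the factoring relations survive the degenerate cases under the zero convention (or, alternatively, your polynomial-identity remark) is a small improvement, since the paper leaves this implicit.
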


\begin{proof}
\[
\begin{aligned}
  \binom{i}{k}\binom{j}{k} - \binom{i-1}{k}\binom{j+1}{k}
  &= \binom{i-1}{k-1}\binom{j}{k-1}
     \left(\frac{i(j-k+1)}{k^{2}} - \frac{(i-k)(j+1)}{k^{2}}\right) \\
  &= \binom{i-1}{k-1}\binom{j}{k-1}
     \cdot \frac{k(j-i+1)}{k^{2}} \\
  &= \frac{j - i + 1}{k}\,\binom{i-1}{k-1}\binom{j}{k-1}.
\end{aligned}
\]
This completes the proof.
\end{proof}

\begin{proof}[Proof of Lemma~\ref{Lem:2.1}]
We first treat the special case $a=0$, and then deduce the general case.

\medskip
\noindent\textbf{Step 1: the case $a=0$.}
We claim that
\begin{equation}\label{eq:M0b}
  \det(M(0,b,i,j))
  = \prod_{w=0}^{b-1}
      \frac{\displaystyle\prod_{u=1}^{b-w}\bigl(j - i + 1 - w - 2u\bigr)}
           {(b-w)!}.
\end{equation}

By definition, we know
%%\[
%  M(0,b,i,j)
%  =
 % \begin{pmatrix}
 %   1 & \binom{i}{1}\binom{j}{1} & \cdots & \binom{i}{b}\binom{j}{b} \\
%    1 & \binom{i+1}{1}\binom{j-1}{1} & \cdots & \binom{i+1}{b}\binom{j-1}{b} \\
%    \vdots & \vdots & \ddots & \vdots \\
%    1 & \binom{i+b}{1}\binom{j-b}{1} & \cdots & \binom{i+b}{b}\binom{j-b}{b}
%  \end{pmatrix},
%\]
%so
\[
  \det(M(0,b,i,j))
  =
  \left|
  \begin{array}{cccc}
    1 & \binom{i}{1}\binom{j}{1} & \cdots & \binom{i}{b}\binom{j}{b} \\
    1 & \binom{i+1}{1}\binom{j-1}{1} & \cdots & \binom{i+1}{b}\binom{j-1}{b} \\
    \vdots & \vdots & \ddots & \vdots \\
    1 & \binom{i+b}{1}\binom{j-b}{1} & \cdots & \binom{i+b}{b}\binom{j-b}{b}
  \end{array}
  \right|.
\]

For each $\ell = b,b-1,\dots,1$, perform the following row operation
\[
  R_{\ell+1} \leftarrow R_{\ell+1} - R_{\ell}.
\]
This does not change the determinant. In the first column this produces
\[
  (1,0,\dots,0)^{\mathsf T}.
\]
For the $(\ell+1)$-th row and the $(k+1)$-th column ($1 \le k \le b$), the new entry is
\[
  \binom{i+\ell}{k}\binom{j-\ell}{k}
  - \binom{i+\ell-1}{k}\binom{j-\ell+1}{k}.
\]
By applying Lemma~\ref{Lem:A.1} with $i \mapsto i+\ell$ and $j \mapsto j-\ell$, we obtain
\[
  \binom{i+\ell}{k}\binom{j-\ell}{k}
  - \binom{i+\ell-1}{k}\binom{j-\ell+1}{k}
  = \frac{j-i+1-2\ell}{k}\,\binom{i+\ell-1}{k-1}\binom{j-\ell}{k-1}.
\]
Thus, the determinant of the obtained matrix has the form
\begin{equation*}
    \begin{aligned}
         \det(M(0,b,i,j))&=
  \left|
  \begin{array}{cccc}
    1 & \binom{i}{1}\binom{j}{1} & \cdots & \binom{i}{b}\binom{j}{b} \\[2pt]
    0 & (j-i-1)\binom{i}{0}\binom{j-1}{0} & \cdots & \dfrac{j-i-1}{b}\binom{i}{b-1}\binom{j-1}{b-1} \\
    \vdots & \vdots & \ddots & \vdots \\
    0 & (j-i-2b+1)\binom{i+b-1}{0}\binom{j-b}{0} & \cdots & \dfrac{j-i-2b+1}{b}\binom{i+b-1}{b-1}\binom{j-b}{b-1}
  \end{array}
  \right|\\
  &=\frac{\prod\limits_{u=1}^{b}(j-i+1-2u)}{b!}\cdot \left|
  \begin{array}{cccc}
    \binom{i}{0}\binom{j-1}{0}&\binom{i}{1}\binom{j-1}{1}&\cdots& \binom{i}{b-1}\binom{j-1}{b-1}  \\
    \binom{i+1}{0}\binom{j-2}{0}&\binom{i+1}{1}\binom{j-2}{1}&\cdots& \binom{i+1}{b-1}\binom{j-2}{b-1}  \\
    \vdots&\vdots&\ddots&\vdots\\
    \binom{i+b-1}{0}\binom{j-b}{0}&\binom{i+b-1}{1}\binom{j-b}{1}&\cdots& \binom{i+b-1}{b-1}\binom{j-b}{b-1}  \\
  \end{array}\right|\\
  &=\frac{\prod\limits_{u=1}^{b}(j-i+1-2u)}{b!}\cdot\det(M(0,b-1,i,j-1)).
    \end{aligned}
\end{equation*}

%In each of the last $b$ rows, every entry in columns $2$ to $b+1$ contains the factor
%\((j-i+1-2\ell)\). Hence we can factor out
%\[
%  \prod_{\ell=1}^{b}(j-i+1-2\ell)
%  = \prod_{u=1}^{b}(j-i+1-2u)
%\]
%from the rows. Moreover, in each column $k=1,\dots,b$ (i.e., columns $2$ to $b+1$ of the full matrix) there is a factor $1/k$ coming from the denominator, so we can factor out
%\[
%  \prod_{k=1}^{b}\frac{1}{k} = \frac{1}{b!}
%\]
%from the columns. The remaining $b \times b$ block is exactly $M(0,b-1,i,j-1)$. Therefore we obtain the recurrence
%\[
 % \det(M(0,b,i,j))
  %= \frac{\displaystyle\prod_{u=1}^{b}(j-i+1-2u)}{b!}\,
%    \det(M(0,b-1,i,j-1)).
%\]

More generally, for any $0 \le w \le b-1$, the same argument on $M(0,b-w,i,j-w)$ yields
\[
  \frac{\det(M(0,b-w,i,j-w))}{\det(M(0,b-w-1,i,j-w-1))}
  = \frac{\displaystyle\prod_{u=1}^{b-w}(j-i+1-w-2u)}{(b-w)!}.
\]
Thus,
\begin{equation*}
    \begin{aligned}
    \det(M(0,b,i,j))
  &= \left(
      \prod_{w=0}^{b-1}
      \frac{\det(M(0,b-w,i,j-w))}{\det(M(0,b-w-1,i,j-w-1))}
    \right)
    \det(M(0,0,i,j-b))\\
&  = \prod_{w=0}^{b-1}
      \frac{\displaystyle\prod_{u=1}^{b-w}(j-i+1-w-2u)}{(b-w)!},    
    \end{aligned}
\end{equation*}
which proves \eqref{eq:M0b}.

\medskip
\noindent\textbf{Step 2: the general case $a>0$.}
Now consider $M(a,b,i,j)$ with $0 \le a \le b$. By definition, we have
\[
  M(a,b,i,j)
  =
  \begin{pmatrix}
    \binom{i}{a}\binom{j}{a} & \cdots & \binom{i}{b}\binom{j}{b} \\
    \vdots & \ddots & \vdots \\
    \binom{i+b-a}{a}\binom{j-b+a}{a} & \cdots & \binom{i+b-a}{b}\binom{j-b+a}{b}
  \end{pmatrix}.
\]

For each $\ell=0,1,\dots,b-a$ and each $k=a,a+1,\dots,b$, by using the standard identities
\[
  \binom{i+\ell}{k}
  = \binom{i+\ell}{a}\,
    \frac{\binom{i+\ell-a}{k-a}}{\binom{k}{a}}
  \,\,\mbox{and}\,\,
  \binom{j-\ell}{k}
  = \binom{j-\ell}{a}\,
    \frac{\binom{j-\ell-a}{k-a}}{\binom{k}{a}},
\]
we have
\[
  \binom{i+\ell}{k}\binom{j-\ell}{k}
  = \binom{i+\ell}{a}\binom{j-\ell}{a}\,
    \frac{\binom{i+\ell-a}{k-a}\binom{j-\ell-a}{k-a}}{\binom{k}{a}^{2}}.
\]
%From the $\ell$-th row (for $\ell=0,\dots,b-a$) we can factor out $\binom{i+\ell}{a}\binom{j-\ell}{a}$, and from the column corresponding to $k=a+t$ ($t=0,1,\dots,b-a$) we can factor out $1/\binom{a+t}{a}^{2} = 1/\binom{a+t}{t}^{2}$. After extracting these factors, the remaining matrix is exactly
%\[
 % M(0,b-a,i-a,j-a).
%\]
Consequently, by extracting the factor $\binom{i+\ell}{a}\cdot\binom{j-\ell}{a}$  from the  $\ell$-th row and the factor  $\frac{1}{\binom{a+h}{a}^2}$ from the  $h$-th column of matrix   $M(a,b,i,j)$ for each $0\leq\ell,h\leq b-a$, we obtain  .
\[
  \det(M(a,b,i,j))
  = \left(
      \prod_{\ell=0}^{b-a}\binom{i+\ell}{a}\binom{j-\ell}{a}
    \right)\cdot
    \left(
      \prod_{\ell=0}^{b-a}\frac{1}{\binom{a+\ell}{\ell}^{2}}
    \right)
    \det(M(0,b-a,i-a,j-a).
\]
Finally, by applying the result of Step~1, we obtain
\begin{align*}
    \det(M(a,b,i,j))
 & = \prod_{\ell=0}^{b-a}
      \frac{\binom{i+\ell}{a}\binom{j-\ell}{a}}{\binom{a+\ell}{\ell}^{2}}
    \cdot \det(M(0,b-a,i-a,j-a))\\
    &= \left(\prod_{\ell=0}^{b-a}
      \frac{\binom{i+\ell}{a}\binom{j-\ell}{a}}{\binom{a+\ell}{\ell}^{2}}\right)
    \cdot
    \left(\prod_{w=0}^{b-a-1}
      \frac{\displaystyle\prod_{u=1}^{b-a-w}\bigl(j - i + 1 - w - 2u\bigr)}
           {(b-a-w)!}\right),
\end{align*}
which completes the proof.

%Now we apply the result of Step~1 (with $b$ replaced by $b-a$ and $j$ replaced by $j-a$) to $\det(M(0,b-a,i-a,j-a))$, and obtain
%\[
%  \det(M(0,b-a,i-a,j-a))
%  = \prod_{w=0}^{b-a-1}
%      \frac{\displaystyle\prod_{u=1}^{b-a-w}\bigl(j-i+1-w-2u\bigr)}
%           {(b-a-w)!}.
%\]
%Substituting this into the previous expression yields
%\[
 % \det(M(a,b,i,j))
 % = \left(\prod_{\ell=0}^{b-a}
  %    \frac{\binom{i+\ell}{a}\binom{j-\ell}{a}}{\binom{a+\ell}{\ell}^{2}}\right)
   % \cdot
   % \left(\prod_{w=0}^{b-a-1}
    %  \frac{\displaystyle\prod_{u=1}^{b-a-w}\bigl(j - i + 1 - w - 2u\bigr)}
     %      {(b-a-w)!}\right)
%\]

\end{proof}

\begin{proof}[Proof of Lemma~\ref{Lem:2.3}]
We split the proof into two parts. First we derive a recursive relation
for $\det(M_{\ell,k}(0,b,i,j))$ when $\ell \ge 2$, and then we evaluate
the base case $\ell=1$. Combining these two steps yields the desired
closed formulas.

\medskip
\noindent\textbf{Step 1: A recursion for $\ell \ge 2$.}
Assume $\ell\ge 2$.
By applying the same row operations as in the proof of Lemma~\ref{Lem:2.1}
and using Lemma~\ref{Lem:A.1} at each step, one obtains
\[
  \det\bigl(M_{\ell,k}(0,b,i,j)\bigr)
  =
  \frac{\displaystyle\prod_{u=1}^{b}(j-i-2u+1)}{
        b!\,\dfrac{k}{\ell-1}}\,
  \det\bigl(M_{\ell-1,k-1}(0,b-1,i,j-1)\bigr),
\]
for every integer $\ell$ with $2 \le \ell \le b+1$ and $k \ge b+1$.

Iterating this relation, for each $0 \le m \le \ell-2$ we obtain
\[
  \frac{\det\bigl(M_{\ell-m,k-m}(0,b-m,i,j-m)\bigr)}%
       {\det\bigl(M_{\ell-m-1,k-m-1}(0,b-m-1,i,j-m-1)\bigr)}
  =
  \frac{\displaystyle\prod_{u=1}^{b-m}
        (j-m-i-2u+1)}{(b-m)!\,\dfrac{k-m}{\ell-m-1}}.
\]
Hence, by telescoping the above products, we get
\begin{equation}\label{eq:recursion-ell}
\begin{aligned}
  \det\bigl(M_{\ell,k}(0,b,i,j)\bigr)
  &= \prod_{m=0}^{\ell-2}
       \frac{\det\bigl(M_{\ell-m,k-m}(0,b-m,i,j-m)\bigr)}%
            {\det\bigl(M_{\ell-m-1,k-m-1}(0,b-m-1,i,j-m-1)\bigr)} \\
  &\qquad\cdot\det\left(M_{1,k-\ell+1}(0,b-\ell+1,i,j-\ell+1)\right) \\
  &= \left(\prod_{m=0}^{\ell-2}
       \frac{\displaystyle\prod_{u=1}^{b-m}(j-m-i-2u+1)}%
            {(b-m)!\,\dfrac{k-m}{\ell-m-1}}\right)\cdot
     \det\bigl(M_{1,k-\ell+1}(0,b-\ell+1,i,j-\ell+1)\bigr).
\end{aligned}
\end{equation}
Thus, for $\ell\ge 2$, it remains to compute the determinant
$\det(M_{1,K}(0,B,I,J))$ in closed form, where 
\[
  B = b-\ell+1,\quad
  I = i,\quad
  J = j-\ell+1,\quad
  K = k-\ell+1.
\]

\medskip
\noindent\textbf{Step 2: The case $\ell=1$.}
We now evaluate $\det(M_{1,K}(0,B,I,J))$ for general integers
$B,I,J,K$ with $0\le B\le p-1$, $I\ge 0$, $J\ge B+1$ and $K\ge B+1$.

By definition, $M_{1,K}(0,B,I,J)$ is obtained from the matrix
$M(0,B,I,J)$ by replacing its first column with the column corresponding
to the index $K$:
\[
  M_{1,K}(0,B,I,J)
  =
  \begin{pmatrix}
    \binom{I}{K}\binom{J}{K} &
    \binom{I}{1}\binom{J}{1} & \cdots &
    \binom{I}{B}\binom{J}{B} \\
    \vdots & \vdots & & \vdots \\
    \binom{I+B}{K}\binom{J-B}{K} &
    \binom{I+B}{1}\binom{J-B}{1} & \cdots &
    \binom{I+B}{B}\binom{J-B}{B}
  \end{pmatrix}.
\]
By permuting the first column of matrix $M_{1,K}(0,B,I,J)$ to the last column, we obtain 
%Permuting columns so that the $K$-column becomes the last one introduces
%a factor $(-1)^{B}$ in the determinant. Thus
\[
  \det\bigl(M_{1,K}(0,B,I,J)\bigr)
  = (-1)^{B} \det(\widehat{M}),
\]
where $\widehat{M}$ is the matrix with columns ordered as
$1,2,\dots,B,K$.

Next, as in the proof of Lemma~\ref{Lem:2.1}, we factor out $(I+h)(J-h)$  from $h$-th row   and $g^2$ 
   from  $g$-th column, where $h=0,1,\cdots,B$  and $g=1,2,\cdots,B,K$. Thus, we obtain

%Next, as in the proof of Lemma~\ref{Lem:2.1}, we factor out from each
%row $r$ (for $0\le r\le B$) the product $(I+r)(J-r)$, and from each
%non–distinguished column we factor out suitable binomial coefficients.
%A direct computation shows that this yields
%\[
%  \det(\widehat{M})
%  =
%  \frac{\displaystyle\prod_{h=0}^{B}(I+h)(J-h)}{(B!)^{2}K^{2}}
%  \det(\widetilde{M}),
%\]
\[
  \det(\widehat{M})
  =
  \frac{\displaystyle\prod_{h=0}^{B}(I+h)(J-h)}{(B!)^{2}K^{2}}
  \det(M_{B+1,K-1}(0,B,I-1,J-1)).
\]
%where $\widetilde{M}$ is again a $(B+1)\times(B+1)$ matrix of the same
%type as in Lemma~\ref{Lem:2.1}, but with parameters $(I-1,J-1)$ and with
%the distinguished column index reduced by $1$.

In general, for $0\leq u\leq B-1$, we have
\begin{equation*}
    \frac{\det(M_{B+1-u,K-1-u}(0,B-u,I-1,J-1-u))}{\det(M_{B-u,K-2-u}(0,B-1-u,I-1,J-u-2))}=\prod\limits_{h=0}^{B-u-1}\frac{J-I-u-2h-1}{(B-u-1)!(K-u-1)}.
\end{equation*}
Thus, we have
\small{
\begin{equation}\label{eq:l1-general}
   \begin{aligned}
     &\det\bigl(M_{1,K}(0,B,I,J)\bigr)
  = (-1)^{B} \det(\widehat{M})=(-1)^B\frac{\displaystyle\prod_{h=0}^{B}(I+h)(J-h)}{(B!)^{2}K^{2}}
  \det(M_{B+1,K-1}(0,B,I-1,J-1))\\
  =&(-1)^B\frac{\displaystyle\prod_{h=0}^{B}(I+h)(J-h)}{(B!)^{2}K^{2}}\cdot \prod\limits_{u=0}^{B-1}\frac{\det(M_{B+1-u,K-1-u}(0,B-u,I-1,J-1-u))}{\det(M_{B-u,K-2-u}(0,B-1-u,I-1,J-u-2))}\cdot\binom{I-1}{K-B-1}\binom{J-B-1}{K-B-1}\\
  =&(-1)^{B}\frac{\prod\limits_{h=0}^{B}(I+h)(J-h)}{(B!)^{2}K^{2}} \cdot
     \prod\limits_{u=0}^{B-1}
       \frac{\prod\limits_{h=0}^{B-1-u}
               \bigl(J-u-I-1-2h\bigr)}{(B-u-1)!\,(K-u-1)}\cdot
     \binom{I-1}{K-B-1}\binom{J-B-1}{K-B-1}.
\end{aligned} 
\end{equation}
}
%Repeating the same reduction with respect to the parameter $B$ and using
%Lemma~\ref{Lem:A.1} at each step, one obtains a recursion analogous to
%Step~1. Iterating it down to the $2\times 2$ base case and evaluating
%the resulting $2\times 2$ determinant explicitly, we finally arrive at
%the closed formula
%\begin{equation}\label{eq:l1-general}
%\begin{aligned}
 % \det\bigl(M_{1,K}(0,B,I,J)\bigr)
 % &= (-1)^{B}\,
  %   \left(\frac{\displaystyle\prod_{h=0}^{B}(I+h)(J-h)}{(B!)^{2}K^{2}}\right) \cdot\left(
   %  \prod_{u=0}^{B-1}
    %   \frac{\displaystyle\prod_{h=0}^{B-1-u}
 %              \bigl(J-u-I-1-2h\bigr)}{(B-u-1)!\,(K-u-1)} \right)\\
 % &\quad\cdot
  %   \binom{I-1}{K-B-1}\binom{J-B-1}{K-B-1}.
%\end{aligned}
%\end{equation}
This is exactly the desired expression for Lemma~\ref{Lem:2.3} in the
case $\ell=1$, once we substitute $B=b$, $I=i$, $J=j$ and $K=k$.

\medskip
\noindent\textbf{Step 3: Combination.}
We now return to the general case $\ell\ge 2$. In the recursion
\eqref{eq:recursion-ell}, we insert the formula
\eqref{eq:l1-general} for
\[
  \det\bigl(M_{1,k-\ell+1}(0,b-\ell+1,i,j-\ell+1)\bigr)
\]
with
\[
  B=b-\ell+1,\quad I=i,\quad J=j-\ell+1,\quad K=k-\ell+1.
\]
A straightforward simplification of the resulting products yields
\begin{equation*}
\begin{aligned}
  \det\bigl(M_{\ell,k}(0,b,i,j)\bigr)
  &= (-1)^{\,b-\ell+1}
     \left(\prod_{m=0}^{\ell-2}
       \frac{
         \displaystyle\prod_{u=1}^{b-m}\bigl(j-m-i-2u+1\bigr)
       }{
         (b-m)!\,\dfrac{k-m}{\ell-m-1}
       }\right)
     \cdot\left(
     \frac{
       \displaystyle\prod_{h=0}^{b-\ell+1}(i+h)(j-\ell+1-h)
     }{
       \bigl((b-\ell+1)!\bigr)^{2}\,(k-\ell+1)^{2}
     } \right)\\
  &\quad\cdot \left(
     \prod_{u=0}^{b-\ell}
       \frac{
         \displaystyle\prod_{h=0}^{b-\ell-u}\bigl(j-u-i-\ell-2h\bigr)
       }{
         (b-\ell-u)!\,(k-\ell-u)
       }\right)
     \cdot
     \binom{i-1}{k-b-1}\binom{j-b-1}{k-b-1},
\end{aligned}
\end{equation*}
which is precisely the formula stated in Lemma~\ref{Lem:2.3} for
$\ell\ge 2$.

Together with the case $\ell=1$ given by \eqref{eq:l1-general}, this
completes the proof of Lemma~\ref{Lem:2.3}.
\end{proof}

\begin{proof}[Proof of Lemma~\ref{Lem:2.2}]
Let $A= t - \Bigl\lceil \frac{i}{2} \Bigr\rceil$ and 
  $B= s - t + \Bigl\lceil \frac{i}{2} \Bigr\rceil$,
then $A+B=s$.  From Lemma~\ref{Lem:2.1}, we have
\[
  \det\bigl(M(0,A-1,i-t+1,t-1)\bigr)
  =
  \prod_{w=0}^{A-2}
    \frac{\displaystyle\prod_{u=1}^{A-1-w}\bigl(2t-i-1-w-2u\bigr)}
         {(A-1-w)!}.
\]
Since
\[
  1 \le 2t-i-1-w-2u \le 2s-4 < p
\]
for all $0 \le w \le A-2$ and $1 \le u \le A-1-w$, none of these factors vanishes in
$\mathbb{F}_{p}$, and hence
\[
  \det\bigl(M(0,A-1,i-t+1,t-1)\bigr) \neq 0.
\]
Thus $M(0,A-1,i-t+1,t-1)$ is invertible.

\medskip
\noindent\textbf{Case (i): $2t-2s+1 \le i \le 2t-s-1$.}
In this range we have $A \ge B$.
Set
\[
  D := M(0,A-1,i-t+1,t-1),
\]
and for $1 \le r,c \le A$ let $D_{r,c}$ be the minor obtained by deleting the
$r$-th row and $c$-th column from $D$. Then the inverse of $D$ is given by
\[
  D^{-1}
  = \frac{1}{\det(D)}
    \begin{pmatrix}
      D_{1,1} & \cdots & (-1)^{A+1}D_{A,1} \\
      \vdots  & \ddots & \vdots           \\
      (-1)^{A+1}D_{1,A} & \cdots & D_{A,A}
    \end{pmatrix},
\]
i.e. the transpose of the cofactor matrix divided by $\det(D)$.

By the definition in Lemma~\ref{Lem:2.2}, the matrix
$C=(c_{\ell,j})_{1\le \ell\le A,\;1\le j\le B}$ satisfies
\[
  D \, C = M(A,s-1,i-t+1,t-1,A).
\]
Hence
\[
  C = D^{-1} M(A,s-1,i-t+1,t-1,A).
\]
Writing this out explicitly, we obtain
\begin{equation*}
\begin{aligned}
  &\begin{pmatrix}
    c_{1,1} & \cdots & c_{1,B} \\
    \vdots  &        & \vdots  \\
    c_{A,1} & \cdots & c_{A,B}
  \end{pmatrix}\\
  =& \frac{1}{\det(D)}
     \begin{pmatrix}
       D_{1,1} & \cdots & (-1)^{A+1}D_{A,1} \\
       \vdots  & \ddots & \vdots           \\
       (-1)^{A+1}D_{1,A} & \cdots & D_{A,A}
     \end{pmatrix}\cdot
     \begin{pmatrix}
       \binom{i-t+1}{A}\binom{t-1}{A} & \cdots &
       \binom{i-t+1}{s-1}\binom{t-1}{s-1} \\
       \vdots & \ddots & \vdots \\
       \binom{i-t+A}{A}\binom{t-A}{A} & \cdots &
       \binom{i-t+A}{s-1}\binom{t-A}{s-1}
     \end{pmatrix}\\
     =&\frac{1}{\det(D)}\begin{pmatrix}
         \sum\limits_{h=1}^{A}(-1)^{1+h}D_{h,1}\cdot\binom{i-t+h}{A}\binom{t-h}{A}&\cdots&\sum\limits_{h=1}^{A}(-1)^{1+h}D_{h,1}\cdot\binom{i-t+h}{s-1}\binom{t-h}{s-1}\\
         \sum\limits_{h=1}^{A}(-1)^{2+h}D_{h,2}\cdot\binom{i-t+h}{A}\binom{t-h}{A}&\cdots&\sum\limits_{h=1}^{A}(-1)^{2+h}D_{h,2}\cdot\binom{i-t+h}{s-1}\binom{t-h}{s-1}\\
         \vdots&\ddots&\vdots\\
         \sum\limits_{h=1}^{A}(-1)^{A+h}D_{h,A}\cdot\binom{i-t+h}{A}\binom{t-h}{A}&\cdots&\sum\limits_{h=1}^{A}(-1)^{A+h}D_{h,A}\cdot\binom{i-t+h}{s-1}\binom{t-h}{s-1}\\
     \end{pmatrix}\\
     =&\frac{1}{\det(D)}\cdot\begin{pmatrix}
        \det(M_{1,A}(0,A-1,i-t+1,t-1))&\cdots&\det(M_{1,s-1}(0,A-1,i-t+1,t-1))\\
         \det(M_{2,A}(0,A-1,i-t+1,t-1))&\cdots&\det(M_{2,s-1}(0,A-1,i-t+1,t-1))\\
         \vdots&\ddots&\vdots\\
         \det(M_{A,A}(0,A-1,i-t+1,t-1))&\cdots&\det(M_{A,s-1}(0,A-1,i-t+1,t-1))
     \end{pmatrix}.
\end{aligned}
\end{equation*}

% Expanding along the column replaced in $M_{\ell,k}(0,A-1,i-t+1,t-1)$ and using
%  Cramer's rule, we see that for $1\le \ell \le A$ and $A \le k \le s-1$ the $(\ell,k)$-entry in the above product is precisely
%\[
%  \frac{1}{\det(D)} \det\bigl(M_{\ell,k}(0,A-1,i-t+1,t-1)\bigr).
%\]
%For brevity, we now write
%\[
 % M_{\ell,k}(0,A-1,i-t+1,t-1)
%\]
%to denote this determinant. Thus
%\[
%\small{ \begin{pmatrix}
%    c_{1,1} & \cdots & c_{1,B} \\
%    \vdots  &        & \vdots  \\
%    c_{A,1} & \cdots & c_{A,B}
%  \end{pmatrix}
%  = \frac{1}{\det(D)}
%    \begin{pmatrix}
%      M_{1,A}(0,A-1,i-t+1,t-1) & \cdots & M_{1,s-1}(0,A-1,i-t+1,t-1) \\
%      \vdots & \ddots & \vdots \\
%      M_{B,A}(0,A-1,i-t+1,t-1) & \cdots & M_{B,s-1}(0,A-1,i-t+1,t-1)
%    \end{pmatrix}}.
% \]
Therefore, to show that the $B\times B$ leading principal submatrix
$\bigl(c_{\ell,j}\bigr)_{1\le \ell\le B,\;1\le j\le B}$ is invertible, it suffices
to prove that
\[
  \begin{pmatrix}
   \det(M_{1,A}(0,A-1,i-t+1,t-1)) & \cdots & \det(M_{1,s-1}(0,A-1,i-t+1,t-1)) \\
    \vdots & \ddots & \vdots \\
    \det(M_{B,A}(0,A-1,i-t+1,t-1)) & \cdots & \det(M_{B,s-1}(0,A-1,i-t+1,t-1))
  \end{pmatrix}
\]
is invertible.

By Lemma~\ref{Lem:2.3}, for all $2 \le \ell \le B$ and $0 \le w \le B-1$ we have
\begin{equation*}
\begin{aligned}
  &\det(M_{\ell,A+w}(0,A-1,i-t+1,t-1)) \\
  &= (-1)^{A-\ell}
     \left(\prod_{m=0}^{\ell-2}
       \frac{\displaystyle\prod_{u=1}^{A-m-1}\bigl(2t-m-i-2u-1\bigr)}
            {(A-m-1)!\,\dfrac{A+w-m}{\ell-m-1}}\right)
     \cdot
     \left(\frac{\displaystyle\prod_{h=0}^{A-\ell}(i-t+1+h)(t-\ell-h)}
          {((A-\ell)!)^{2}(A+w-\ell+1)^{2}}\right)       \\
  &\quad\cdot\left(
     \prod_{u=0}^{A-\ell-1}
       \frac{\displaystyle\prod_{h=0}^{A-\ell-u-1}\bigl(2t-u-i-\ell-2h-2\bigr)}
            {(A-\ell-u-1)!(A+w-\ell-u)}\right)
     \cdot \binom{i-t}{w}\binom{t-A-1}{w}.
\end{aligned}
\end{equation*}
Rearranging the factors, this can be written as
\begin{equation*}
\begin{aligned}
  &\det(M_{\ell,A+w}(0,A-1,i-t+1,t-1))\\
 = & (-1)^{A-\ell}
     \prod_{m=0}^{\ell-2}
       \frac{
        (\ell-m-1) \displaystyle\prod_{u=1}^{A-m-1}\bigl(2t-m-i-2u-1\bigr)
       }{(A-m-1)!}\cdot
     \frac{\displaystyle\prod_{h=0}^{A-\ell}(i-t+1+h)(t-\ell-h)}
          {((A-\ell)!)^{2}}\\
  \cdot&
     \prod_{u=0}^{A-\ell-1}
       \frac{\displaystyle\prod_{h=0}^{A-\ell-u-1}\bigl(2t-u-i-\ell-2h-2\bigr)}
            {(A-\ell-u-1)!}\cdot \binom{i-t}{w}\binom{t-A-1}{w} \\
   \cdot&  \left(
      (A+w-\ell+1)^{2} \prod_{m=0}^{\ell-2}(A+w-m)
       \prod_{u=0}^{A-\ell-1}(A+w-\ell-u)
     \right)^{-1}.
\end{aligned}
\end{equation*}
Moreover,
\begin{equation*}
\begin{aligned}
  &\left(
     (A+w-\ell+1)^{2}\prod_{m=0}^{\ell-2}(A+w-m)
     \prod_{u=0}^{A-\ell-1}(A+w-\ell-u)
   \right)^{-1}               \\
  &= \left(
       (A+w-\ell+1)
       \prod_{u=0}^{A-1}(A+w-u)
     \right)^{-1}
   = \left(\prod_{u=0}^{A-1}(A+w-u)\right)^{-1}
     \cdot \frac{1}{A+w-\ell+1}.
\end{aligned}
\end{equation*}

Factoring out the corresponding common factors from each row $\ell$ and column index $w$, we obtain
\small{\begin{equation*}
\begin{aligned}
  &\det
   \begin{pmatrix}
     \det(M_{1,A}(0,A-1,i-t+1,t-1)) & \cdots & \det(M_{1,s-1}(0,A-1,i-t+1,t-1)) \\
     \vdots & \ddots & \vdots \\
     \det(M_{B,A}(0,A-1,i-t+1,t-1)) & \cdots & \det(M_{B,s-1}(0,A-1,i-t+1,t-1))
   \end{pmatrix}                       \\
  =& \prod_{\ell=1}^{B}
       \left(
         (-1)^{A-\ell}
         \frac{\displaystyle\prod_{h=0}^{A-\ell}(i-t+1+h)(t-\ell-h)}
              {((A-\ell)!)^{2}}
         \prod_{u=0}^{A-\ell-1}
           \frac{\displaystyle\prod_{h=0}^{A-\ell-u-1}(2t-u-i-\ell-2h-2)}
                {(A-\ell-u-1)!}
       \right)                          \\
  \cdot&
     \prod_{\ell=2}^{B}\;
       \prod_{m=0}^{\ell-2}
         \frac{
           (\ell-m-1)\displaystyle\prod_{u=1}^{A-m-1}\bigl(2t-m-i-2u-1\bigr)
         }{(A-m-1)!}\cdot
     \prod_{w=0}^{B-1}
       \frac{\displaystyle\binom{i-t}{w}\binom{t-A-1}{w}}
            {\displaystyle\prod_{u=0}^{A-1}(A+w-u)}\\
            &\cdot
     \det
     \begin{pmatrix}
       \frac{1}{A}         & \frac{1}{A+1}     & \cdots & \frac{1}{A+B-1} \\
       \frac{1}{A-1}       & \frac{1}{A}       & \cdots & \frac{1}{A+B-2} \\
       \vdots              & \vdots            & \ddots & \vdots          \\
       \frac{1}{A-B+1}     & \frac{1}{A-B+2}   & \cdots & \frac{1}{A}
     \end{pmatrix}.
\end{aligned}
\end{equation*}}

The last matrix is a Cauchy matrix with entries
\[
  \frac{1}{(A-\ell+1)+(j-1)},\qquad 1\le \ell,j\le B.
\]
Since $A\ge B$ and $A+B=s$, the denominator ranges over the integers
\[
  A-B+1,\dots,A+B-1 = s-1,
\]
all of which lie in $\{1,\dots,s-1\}\subseteq\{1,\dots,p-1\}$ (because $p\ge  2s$).
Thus all denominators are nonzero in $\mathbb{F}_{p}$, and by Cauchy’s
determinant formula the last determinant is nonzero. Therefore the whole product
above is nonzero, and the $B\times B$ matrix
$\bigl(c_{\ell,j}\bigr)_{1\le \ell\le B,\;1\le j\le B}$ is invertible.
This proves part~(i).

\medskip
\noindent\textbf{Case (ii): $2t-s \le i \le 2t-2$.}
In this range one has $A \le B$. We need to prove that the $A\times A$ submatrix $\bigl(c_{\ell,j}\bigr)_{1\le \ell\le A,\;B-A+1\le j\le B}$ is invertible. Since $D$ is invertible, it is enough to show that
$ M(B,s-1,i-t+1,t-1)$
is invertible. 
%Because the last $A$ columns of $C$ are obtained by multiplying
%$D^{-1}$ with the last $A$ columns of $M(A,s-1,i-t+1,t-1,A)$, which %form $M(B,s-1,i-t+1,t-1)$.

%We distinguish several subcases.

First, since $2t-s \le 2t-2$, we have $s \ge 2$.  If $s=2$, then necessarily $i=2t-2$ and
\[
 \det( M(1,1,t-1,t-1))
  =
  \binom{t-1}{1}\binom{t-1}{1}\neq 0.
\]

If $s\ge 3$ and $2t-3 \le i \le 2t-2$, then
 $ B = s - t + \left\lceil\frac{i}{2}\right\rceil = s-1.$
and therefore
\[
  \det\bigl(M(B,s-1,i-t+1,t-1)\bigr)
  = \det\bigl(M(s-1,s-1,\lfloor i/2 \rfloor,t-1)\bigr)
  = \binom{\lfloor i/2 \rfloor}{s-1}\binom{t-1}{s-1} \neq 0.
\]

Finally, assume $s\ge 4$ and $2t-s \le i \le 2t-4$. By Lemma~\ref{Lem:2.1}, we have
\begin{equation*}
\begin{aligned}
  &\det\bigl(M(B,s-1,i-t+1,t-1)\bigr)\\
  =&
  \prod_{\ell=0}^{s-1-B}
    \frac{\displaystyle\binom{i-t+1+\ell}{B}\binom{t-1-\ell}{B}}
         {\displaystyle\binom{B+\ell}{\ell}^{2}}                    \cdot
    \prod_{w=0}^{s-B-2}
      \frac{\displaystyle\prod_{u=1}^{s-B-1-w}(2t-i-1-w-2u)}
           {(s-1-B-w)!}.
\end{aligned}
\end{equation*}
Using the relations
\[
  A = t - \Bigl\lceil\frac{i}{2}\Bigr\rceil,
  \qquad
  B = s - t + \Bigl\lceil\frac{i}{2}\Bigr\rceil,
\]
and the bounds $2t-s \le i \le 2t-4$, $s\ge 4$, $t\ge 2s$, one checks that
the following inequalities hold:

\begin{itemize}
  \item For all $0\le \ell\le s-1-B$,
\begin{gather*}
  2\le \Bigl\lfloor\frac{i}{2}\Bigr\rfloor - s +2
     \le i-t+1+\ell-(B-1)
     \le i-t+1+\ell
     \le \Bigl\lfloor\frac{i}{2}\Bigr\rfloor < t \leq p,\\
  1 < t-s+1
     \le t-1-\ell-(B-1)
     < t-1-\ell
     < t \leq p,\\
  2 \le \left\lceil\frac{s}{2}\right\rceil
     \le s - t + \left\lceil\frac{i}{2}\right\rceil
     \le B+\ell
     \le s-1 < t \leq p.
\end{gather*}
  
  \item For all $0\le w\le s-B-2$ and $1\le u\le s-1-B-w$,
  \[
    1 \le 2\left\lceil\frac{i}{2}\right\rceil - i +1
      \le 2t-i-1-(2s-2B-2)
      \le 2t-i-1-w-2u
      \le 2t-i-3
      \le s-3.
  \]
\end{itemize}

Thus every binomial coefficient and every factor $2t-i-1-w-2u$ appearing in the
above product is an integer in the range $\{1,\dots,p-1\}$, and hence is
nonzero in $\mathbb{F}_{p}$. It follows that
\[
  \det\bigl(M(B,s-1,i-t+1,t-1)\bigr) \neq 0.
\]

This shows that in Case~(ii) the $A\times A$ submatrix
$\bigl(c_{\ell,j}\bigr)_{1\le \ell\le A,\;B-A+1\le j\le B}$ is invertible, which
completes the proof of Lemma~\ref{Lem:2.2}.
\end{proof}

\bibliographystyle{plain}
\bibliography{schurHRS}

\end{document}